\newcommand{\algrule}[1][.2pt]{\par\vspace\baselineskip\hrule height #1\par\vspace\baselineskip}
\DeclareMathOperator{\on}{on}
\DeclareMathOperator{\off}{off}
\DeclareMathOperator{\idle}{idle}
\DeclareMathOperator{\busy}{busy}
\DeclareMathOperator{\costs}{costs}
\DeclareMathOperator{\OPT}{OPT}
\DeclareMathOperator{\opt}{opt}
\DeclareMathOperator{\pltr}{pltr}
\DeclareMathOperator{\PLTR}{PLTR}
\DeclareMathOperator{\ltr}{LTR}
\DeclareMathOperator{\LTR}{LTR}
\DeclareMathOperator{\aug}{aug}
\DeclareMathOperator{\real}{real}
\DeclareMathOperator{\fv}{fv}
\DeclareMathOperator{\uv}{uv}
\DeclareMathOperator{\pv}{pv}
\DeclareMathOperator{\vol}{vol}
\DeclareMathOperator{\opdef}{def}
\DeclareMathOperator{\exc}{exc}
\DeclareMathOperator{\keepidle}{keepidle}
\DeclareMathOperator{\keepbusy}{keepbusy}
\DeclareMathOperator{\crit}{crit}
\DeclareMathOperator{\fillop}{fill}
\DeclareMathOperator{\close}{close}
\DeclareMathOperator{\res}{Sup}
\newtheorem{theorem}{Theorem}
\newtheorem{lemma}[theorem]{Lemma}
\newtheorem{definition}[theorem]{Definition}
\title{Greedy Minimum-Energy Scheduling}
\author{Gunther Bidlingmaier\thanks{Department of Computer Science, Technical University of Munich, Germany,\ \text{g.bidlingmaier@tum.de}}}
\date{}
\begin{document}

\selectlanguage{english}

\maketitle
\begin{abstract}
We consider the problem of energy-efficient scheduling across multiple processors with a power-down mechanism.
In this setting a set of $n$ jobs with individual release times, deadlines, and processing volumes must be scheduled across $m$ parallel processors while minimizing the consumed energy.
Idle processors can be turned off to save energy, while turning them on requires a fixed amount of energy.
For the special case of a single processor, the greedy Left-to-Right algorithm \citep{irani_left_to_right_soda_2003} guarantees an approximation factor of $2$.
We generalize this simple greedy policy to the case of $m \geq 1$ processors running in parallel and show that the energy costs are still bounded by $2 \OPT + P$, where $\OPT$ is the energy consumed by an optimal solution and $P < \OPT$ is the total processing volume.
Our algorithm has a running time of $\mathcal{O}(n f \log d)$, where $d$ is the difference between the latest deadline and the earliest release time, and $f$ is the running time of a maximum flow calculation in a network of $\mathcal{O}(n)$ nodes.
\end{abstract}
\clearpage


\section{Introduction}
Energy-efficiency has become a major concern in most areas of computing for reasons that go beyond the apparent ecological ones.
At the hardware level, excessive heat generation from power consumption has become one of the bottlenecks.
For the billions of mobile battery-powered devices, power consumption determines the length of operation and hence their usefulness.
On the level of data centers, electricity is often the largest cost factor and cooling one of the major design constraints.
Algorithmic techniques for saving power in computing environments employ two fundamental mechanisms, first the option to power down idle devices and second the option to trade performance for energy-efficiency by speed-scaling processors.
In this paper we study the former, namely classical deadline based scheduling of jobs on parallel machines which can be powered down with the goal of minimizing the consumed energy.

In our setting, a computing device or processor has two possible states, it can be either \emph{on} or \emph{off}.
If a processor is on, it can perform computations while consuming energy at a fixed rate.
If a processor is off, the energy consumed is negligible but it cannot perform computation.
Turning on a processor, i.e.\ transitioning it from the off-state to on-state consumes additional energy.
The problem we have to solve is to schedule a number of jobs or tasks, each with its own processing volume and interval during which it has to be executed.
The goal is to complete every job within its execution interval using a limited number of processors while carefully planning idle times for powering off processors such that the consumed energy is minimized.
Intuitively, one aims for long but few idle intervals, so that the energy required for transitioning between the states is low, while avoiding turned on processors being idle for too long.

\paragraph{Previous work}
This fundamental problem in power management was first considered by~\cite{irani_left_to_right_soda_2003} for a single processor.
In their paper, they devise arguably the simplest algorithm one can think of which goes beyond mere feasibility.
Their greedy algorithm \emph{Left-to-Right} ($\ltr$) is a $2$-approximation and proceeds as follows.
If the processor is currently busy, i.e.\ working on a job, then $\ltr$ greedily keeps the processor busy for as long as possible, always working on the released job with the earliest deadline.
Once there are no more released jobs to be worked on, the processor becomes idle and $\ltr$ greedily keeps the processor idle for as long as possible such that all remaining jobs can still be feasibly completed.
At this point, the processor becomes busy again and $\ltr$ proceeds recursively until all jobs are completed.

The first optimal result for the case of a single processor and jobs with unit processing volume was developed by~\cite{baptiste_unit_jobs}.
He devised a dynamic program that runs in time $\mathcal{O}(n^7)$, where $n$ denotes the number of jobs to be scheduled.
Building on this result,~\cite{baptiste_general_jobs} solved the case of general processing volumes on a single processor in time $\mathcal{O}(n^5)$.
Their sophisticated algorithm involves the computation of multiple dynamic programming tables, the introduction of a special method for speeding up the computation of these tables, and a final post-processing phase.

The first result for an arbitrary number of processors $m$ was given by~\cite{demaine} for the special case of unit processing volumes.
They solved this special case in time $~\mathcal{O}(n^7m^5)$ by building on the original dynamic programming approach of~\cite{baptiste_unit_jobs} while non-trivially obtaining additional structure.
Obtaining good solutions for general job weights is difficult because of the additional constraint that every job can be worked on by at most a single processor at the same time.
Note that this is not an additional restriction for the former special case of unit processing volumes since time is discrete in our problem setting.
It is a major open problem whether the general multi-processor setting is NP-hard.
It took further thirteen years for the first non-trivial result on the general setting to be be developed, i.e.\ an algorithm for the case of multiple processors and general processing volumes of jobs.
In their breakthrough paper,~\cite{antoniadis} develop the first constant-factor approximation for the problem.
Their algorithm guarantees an approximation factor of $3 + \epsilon$ and builds on the Linear Programming relaxation of a corresponding Integer Program.
Their algorithm obtains a possibly infeasible integer solution by building the convex hull of the corresponding fractional solution.
Since this integer solution might not schedule all jobs, they develop an additional \textit{extension} algorithm \textit{EXT-ALG}, which iteratively extends the intervals returned by the rounding procedure by a time slot at which an additional turned on processor allows for an additional unit of processing volume to be scheduled.

\cite{skeletons} improve this approximation factor to $2 + \epsilon$ by incorporating into the Linear Program additional constraints for the number of processors required during every possible time interval.
They also modify the rounding procedure based on their concept of a \textit{multi-processor skeleton}.
Very roughly, a skeleton is a stripped-down schedule which still guarantees a number of processors in the on-state during specific intervals and which provides a lower bound for the costs of an optimal feasible schedule.

Building on this concept of skeletons, they also develop a combinatorial $6$-approximation for the problem.
This algorithm first computes the lower bounds for the number of processors required in every possible time interval starting and ending at a release time or deadline using flow calculations.
Based on these bounds, they define for every processor a single-processor scheduling problem with $\mathcal{O}(n^2)$ artificial jobs.
For each of these single-processor problems they construct a single-processor skeleton using dynamic programming.
These in turn are then combined into a multi-processor skeleton, which is extended into a feasible schedule by first executing EXT-ALG, and then carefully powering on additional processors since EXT-ALG is not sufficient for ensuring feasibility here.

As presented in the papers, both Linear Programs of~\cite{antoniadis} and~\cite{skeletons}, respectively, run in pseudo-polynomial time.
By using techniques presented in~\cite{antoniadis}, the number of time slots which have to be considered can be reduced from $d$ to $\mathcal{O}(n \log d)$, allowing the algorithms to run in polynomial time.
More specifically, the number of constraints and variables of the Linear Programs reduces to $\mathcal{O}(n^2 \log^2 d)$.
However, this improved running time comes at the price of the additive $\epsilon$ in the approximation factors of the two LP-based algorithms.
The running time of the EXT-ALG used by all three approximation algorithms is reduced to $\mathcal{O}(F m n^3 \log^3 d)$, where $F$ refers to a maximum flow calculation in a network with $\mathcal{O}(n^2 \log d)$ edges and $\mathcal{O}(n \log d)$ nodes.

\paragraph{Contribution}
In this paper we develop a greedy algorithm which is simpler and faster than the previous algorithms.
The initially described greedy algorithm Left-to-Right of~\cite{irani_left_to_right_soda_2003} is arguably the simplest algorithm one can think of for a single processor.
We naturally extend $\LTR$ to multiple processors and show that this generalization still guarantees a solution of costs at most $2 \OPT + P$, where $P < \OPT$ is the total processing volume.
Our simple greedy algorithm \textit{Parallel Left-to-Right} ($\PLTR$) is the combinatorial algorithm with the best approximation guarantee and does not rely on Linear Programming and the necessary rounding procedures of~\cite{antoniadis} and~\cite{skeletons}.
It also does not require the EXT-ALG, which all previous algorithms rely on to make their infeasible solutions feasible in an additional phase.

Indeed, $\PLTR$ only relies on the original greedy policy of Left-to-Right: just keep processors in their current state (busy or idle) for as long as feasibly possible.
For a single processor, $\LTR$ ensures feasibility by scheduling jobs according to the policy Earliest-Deadline-First (EDF).
For checking feasibility if multiple processors are available, a maximum flow calculation is required since EDF is not sufficient anymore.
Correspondingly, our generalization $\PLTR$ uses such a flow calculation for checking feasibility.

While the $\PLTR$ algorithm we describe in Section~\ref{section:algorithm} is very simple, the structure exhibited by the resulting schedules is surprisingly rich.
This structure consists of \textit{critical sets of time slots} during which $\PLTR$ only schedules the minimum amount of volume which is feasibly possible.
In Section~\ref{section:structure} we show that whenever $\PLTR$ requires an additional processor to become busy at some time slot $t$, there must exist a critical set of time slots containing $t$.
This in turn gives a lower bound for the number of busy processors required by any solution.

Devising an approximation guarantee from this structure is however highly non-trivial and much more involved than the approximation proof of the single-processor $\LTR$ algorithm, because one has to deal with sets of time slots and not just intervals.
Our main contribution in terms of techniques is a complex procedure which (for the sake of the analysis only) carefully realigns the jobs scheduled in between critical sets of time slots such that it is sufficient to consider intervals as in the single processor case, see Section~\ref{section:approximation} for details.

Finally, we show in Section~\ref{section:running_time} that the simplicity of the greedy policy also leads to a much faster algorithm than the previous ones, namely to a running time $\mathcal{O}(n f \log d)$, where $d$ is the maximal deadline and $f$ is the running time for checking feasibility by finding a maximum flow in a network with $\mathcal{O}(n)$ nodes.

\paragraph{Formal Problem Statement}
Formally, a problem instance consists of a set $J$ of jobs with an integer release time $r_j$, deadline $d_j$, and processing volume $p_j$ for every job $j \in J$.
Each job $j \in J$ has to be scheduled across $m \geq 1$ processors for $p_j$ units of time in the execution interval $E_j \coloneqq [r_j, d_j]$ between its release time and its deadline.
Preemption of jobs and migration between processors is allowed at discrete times and occurs without delay, but no more than one processor may process any given job at the same time.
Without loss of generality, we assume the earliest release time to be $0$ and denote the last deadline by $d$.
The set of discrete time slots is denoted by $T \coloneqq \{0, \ldots, d\}$.
The total amount of processing volume is $P \coloneqq \sum_{j \in J} p_j$.

Every processor is either completely off or completely on in every discrete time slot $t \in T$.
A processor can only work on some job in the time slot $t$ if it is in the on-state.
A processor can be turned on and off at discrete times without delay.
All processors start in the off-state.
The objective now is to find a feasible schedule which minimizes the expended energy $E$, which is defined as follows.
Each processor consumes $1$ unit of energy for every time slot it is in the on-state and $0$ units of energy if it is in the off-state.
Turning a processor on consumes a constant amount of energy $q \geq 0$, which is fixed by the problem instance.
In Graham's notation~\citep{graham}, this setting can be denoted with $m \mid r_j; \overline{d_j}; \mathrm{pmtn} \mid E$.

\paragraph{Costs of busy and idle intervals}
We say a processor is \emph{busy} at time $t \in T$ if some job is scheduled for this processor at time $t$.
Otherwise, the processor is \emph{idle}.
Clearly a processor cannot be busy and off at the same time.
An interval $I \subseteq T$ is a (full) \emph{busy interval} for processor $k \in [m]$ if $I$ is inclusion maximal on condition that processor $k$ is busy in every $t \in I$.
Correspondingly, an interval $I \subseteq T$ is a \emph{partial busy interval} for processor $k$ if $I$ is not inclusion maximal on condition that processor $k$ is busy in very $t \in I$.
We define (partial and full) \emph{idle intervals}, \emph{on intervals}, and \emph{off intervals} of a processor analogously via inclusion maximality.
Observe that if a processor is idle for more than $q$ units of time, it is worth turning the processor off during the corresponding idle interval.
Our algorithm will specify for each processor when it is busy and when it is idle.
Each processor is then defined to be in the off-state during idle intervals of length greater than $q$ and otherwise in the on-state.
Accordingly, we can express the costs of a schedule $S$ in terms of busy and idle intervals.

For a multi-processor schedule $S$, let $S^k$ denote the schedule of processor $k$.
Furthermore, for fixed $k$, let $\mathcal{N}, \mathcal{F}, \mathcal{B}, \mathcal{I}$ be the set of on, off, busy, and idle intervals on $S^k$. 
We partition the costs of processor $k$ into the costs $\on(S^k)$ for residing in the on-state and the costs $\off(S^k)$ for transitioning between the off-state and the on-state, hence $\costs(S^k) = \on(S^k) + \off(S^k) = \sum_{N \in \mathcal{N}} |N| + q$.
Equivalently, we partition the costs of processor $k$ into the costs $\idle(S^k) \coloneqq \sum_{I \in \mathcal{I}} \min \{ |I|, q \}$ for being idle and the costs $\busy(S^k) \coloneqq \sum_{B \in \mathcal{B}} |B|$ for being busy.
The total costs of a schedule $S$ are the total costs across all processors, i.e.\ $\costs(S) = \sum_{k = 1}^{m} \costs(S^k)$.
Clearly we have $\sum_{k = 1}^m \busy(k) = P$, this means for an approximation guarantee the critical part is bounding the idle costs.

\paragraph{Lower and upper bounds for the number of busy processors}
We specify a generalization of our problem which we call \emph{deadline-scheduling-with-processor-bounds}.
Where in the original problem, for each time slot $t$, between $0$ and $m$ processors were allowed to be working on jobs, i.e.\ being busy, we now specify a lower bound $l_t \geq 0$ and an upper bound $m_t \leq m$.
For a feasible solution to \emph{deadline-scheduling-with-processor-bounds}, we require that in every time slot $t$, the number of busy processors, which we denote with $\vol(t)$, lies within the lower and upper bounds, i.e.\ $l_t \leq v(t) \leq m_t$.
This will allow us to express the $\PLTR$ greedy policy of keeping processors idle or busy, respectively.
Note that this generalizes the problem \emph{deadline-scheduling-on-intervals} introduced by~\cite{antoniadis} by additionally introducing lower bounds.

\paragraph{Properties of an optimal schedule}
\begin{definition}\label{def:stair_property}
  Given some arbitrary but fixed order on the number of processors, a schedule $S$ fulfills the \emph{stair-property} if it uses the lower numbered processors first, i.e.\ for every $t \in T$, if processor $k \in [m]$ is busy at $t$, then every processor $k' \leq k$ is busy at $t$.
  This symmetrically implies that if processor $k \in [m]$ is idle at $t$, then every processor $k' \geq k$ is idle at $t$.
\end{definition}

\begin{restatable}{lemma}{lemmastairproperty}\label{lemma:stair_property_opt}
  For every problem instance we can assume the existence of an optimal schedule $S_{\opt}$ which fulfills the stair-property.
\end{restatable}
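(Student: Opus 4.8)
The plan is to start from an arbitrary optimal schedule $S_{\opt}$ and transform it, time slot by time slot, into one that uses the lowest-numbered processors first, without changing which processors are in the on-state at any time (and hence without changing the cost). First I would observe that the cost of a schedule, as decomposed in the excerpt, depends only on the multiset of on/off intervals of each processor — more precisely, on the function $t \mapsto (\text{number of on-processors at } t)$ together with how this is split into intervals per processor, but crucially \emph{not} on which particular job sits on which particular processor at a given time, as long as the ``at most one processor per job'' constraint is respected. So it suffices to permute, within each time slot, the assignment of (job or idle) to processors.

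The key combinatorial step is the following local exchange: fix a time slot $t$ and suppose processors $k' < k$ are such that $k$ is busy at $t$ but $k'$ is idle at $t$. I would like to swap their roles at $t$, i.e.\ move the job currently on $k$ to $k'$ and leave $k$ idle. The obstacle is that this naive per-slot swap can fragment busy intervals and thereby \emph{increase} idle cost: making $k'$ busy for a single isolated slot, or making $k$ idle for a short stretch that now has to be bridged in the on-state, changes the $\min\{|I|,q\}$ terms. The right fix is to perform the exchange not on a single slot but on a whole maximal block: I would take a maximal interval $I$ on which $k$ is busy throughout and $k'$ is idle throughout, and swap the entire assignment of $k$ and $k'$ on $I$ (there are no job-conflicts because on $I$ processor $k'$ carries no job at all, and $k$'s jobs are simply relocated wholesale to $k'$). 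Doing the swap on such maximal blocks means the on/off interval structure of \emph{both} $k$ and $k'$ is preserved up to relabeling, so the total cost is unchanged; one has to check the boundary slots of $I$ carefully to confirm no interval is split.

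To turn these local exchanges into a global statement I would set up a potential/measure argument: define, say, $\Phi(S) = \sum_{t \in T}\sum_{k \in [m]} k \cdot \mathbf{1}[\text{processor } k \text{ busy at } t]$, and argue that each block-swap of the type above strictly decreases $\Phi$ while keeping the cost fixed and keeping the schedule feasible. Since $\Phi$ is a nonnegative integer, this process terminates, and a schedule in which no such improving swap exists is exactly one satisfying the stair-property: if some $k$ were busy at $t$ while a smaller-indexed $k'$ is idle at $t$, the maximal block containing $t$ would give an improving swap. I would also note that the symmetric reformulation in Definition~\ref{def:stair_property} (idle processors are the high-indexed ones) is immediate from the contrapositive, so nothing extra is needed there.

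The main obstacle, as indicated, is the bookkeeping in the block-swap: one must be precise that choosing $I$ maximal (with respect to ``$k$ busy and $k'$ idle throughout'') really does guarantee that neither processor's sequence of busy/idle — and therefore on/off — intervals changes as a multiset, so that $\on(S^k)+\off(S^k)$ and $\on(S^{k'})+\off(S^{k'})$ are both invariant. A clean way to handle this is to argue that after the swap the busy-indicator function of $\{k,k'\}$ at each time slot is unchanged as an unordered pair, hence the union of their on-intervals (as a subset of $T$) is unchanged, and then to check that maximality of $I$ forces the split of that union into the two processors to again be a valid pair of interval families with the same lengths. Feasibility is trivial since the set of (job $\mapsto$ time slot, processor) triples is merely permuted among processors within each slot and the no-two-processors-per-job condition is maintained because $k'$ was idle on all of $I$.
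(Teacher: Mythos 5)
Your overall strategy (repeated downward exchanges between a busy processor $k$ and an idle processor $k'<k$, plus a potential function for termination) matches the paper's, but the specific exchange block you pick breaks the one claim the whole argument rests on. You take $I$ maximal with respect to ``$k$ busy \emph{and} $k'$ idle throughout'' and assert that swapping the two rows on $I$ preserves the busy/idle interval structure of the pair up to relabeling, so costs are unchanged. That is false. Maximality of your $I$ only guarantees that at each boundary slot $k'$ is busy \emph{or} $k$ is idle; in the second case $k'$ is still idle just outside $I$, so $I$ is a full busy interval of $k$ sitting strictly inside a longer idle interval of $k'$, and the swap splits that idle interval of $k'$ into two pieces, each now paying up to $q$, while on $k$ it merely merges short gaps. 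Concretely, take $q=10$, processor $k$ busy on $[0,8]$, idle at $9$, busy on $[10,11]$, idle at $12$, busy on $[13,30]$, and processor $k'$ idle on $[0,21]$, busy on $[22,30]$ (this is an optimal assignment for suitable rigid jobs, since it is a relabeling of the stair schedule). For $t=10$ your block is $I=[10,11]$, and the pair's idle cost jumps from $1+1+\min\{22,10\}=12$ to $\min\{4,10\}+\min\{10,10\}+\min\{10,10\}=24$. So your exchange can strictly increase the cost and destroy optimality; the boundary check you defer as bookkeeping is exactly where the argument fails, and your potential-function framework cannot rescue a step that is not cost-preserving.

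The repair is the paper's choice of block: let $I$ be the \emph{entire} maximal idle interval of $k'$ containing $t$ and move all jobs scheduled on $k$ during $I$ down to $k'$ (still a plain swap of the two rows on $I$, since $k'$ carries no job there). Then $k'$'s idle interval is never split: the new idle intervals of $k'$ are the old idle intervals of $k$ truncated to $I$, each costing at most what it cost before, while $k$ becomes idle on all of $I$ and its new idle interval, even after merging with idle intervals of $k$ protruding beyond $I$, costs at most $q$; a short computation with $\min\{\cdot,q\}$ shows the combined cost of $k$ and $k'$ cannot increase, and at least one job moves down while none moves up, so the process terminates (your potential $\Phi$ works here too). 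Your observations on feasibility (jobs are only permuted among processors within a slot, onto idle slots of $k'$) and on the symmetric restatement of the stair-property are fine and carry over unchanged.
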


\section{Algorithm}\label{section:algorithm}

The \emph{Parallel Left-to-Right} ($\PLTR$) algorithm shown in Algorithm~\ref{alg:pltr} iterates through the processors in some arbitrary but fixed order and keeps the current processor idle for as long as possible such that the scheduling instance remains feasible.
Once the current processor cannot be kept idle for any longer, it becomes busy and $\PLTR$ keeps it and all lower-numbered processors busy for as long as possible while again maintaining feasibility.
The algorithm enforces these restrictions on the busy processors by iteratively making the lower and upper bounds $l_t$, $m_t$ of the corresponding instance of \emph{deadline-scheduling-with-processor-bounds} more restrictive.
Visually, when considering the time slots on an axis from left to right and when stacking the schedules of the individual processors on top of each other, this generalization of the single processor \emph{Left-to-Right} algorithm hence proceeds \emph{Top-Left-to-Bottom-Right}.

Once $\PLTR$ returns with the corresponding tight upper and lower bounds $m_t$, $l_t$, an actual schedule $S_{\pltr}$ can easily be constructed by running the flow-calculation used for the feasibility check depicted in Figure~\ref{fig:flow} or just taking the result of the last flow-calculation performed during $\PLTR$.
The mapping from this flow to an actual assignment of jobs to processors and time slots can then be defined as described in Lemma~\ref{lemma:flow_feasibility}, which also ensures that the resulting schedule fulfills the stair-property from Definition~\ref{def:stair_property}, i.e.\ that it always uses the lower-numbered processors first.

\begin{algorithm}
  \caption{Parallel Left-to-Right}\label{alg:pltr}
\begin{algorithmic}
  \State{$m_t \gets m$ for all $t \in T$}
  \State{$l_t \gets 0$ for all $t \in T$}
  \For{$k \gets m$ \textrm{to} $1$}
    \State{$t\gets 0$}
    \While{$t \leq d$}
      \State{$t \gets $\Call{$\keepidle$}{$k, t$}}
      \State{$t \gets $\Call{$\keepbusy$}{$k, t$}}
    \EndWhile{}
  \EndFor{}
\algrule
  \Function{$\keepidle$}{$k, t$}
    \State{search maximal $t' > t$ s.t.\
    exists feasible schedule with $m_{t''}$ set to $k-1$ for all $t'' \in [t, t')$}
    \State{$m_{t''} \gets k - 1$ for all $t'' \in [t, t')$}
    \State{\Return{$t'$}}
  \EndFunction{}
  \Function{$\keepbusy$}{$k, t$}
    \State{search maximal $t' > t$ s.t.\
    exists feasible schedule with $l_{t''}$ set to $\max\{k, l_{t''}\}$ for all $t'' \in [t, t')$}
    \State{$l_{t''} \gets \max\{k, l_{t''}\}$ for all $t'' \in [t, t')$}
    \State{\Return{$t'$}}
  \EndFunction{}
\end{algorithmic}
\end{algorithm}
As stated in Lemma~\ref{lemma:flow_feasibility}, the check for feasibility in subroutines $\keepidle$ and $\keepbusy$ can be performed by calculating a maximum $\alpha$-$\omega$ flow in the flow network given in Figure~\ref{fig:flow} with a node $u_j$ for every job $j \in J$ and a node $v_t$ for every time slot $t \in T$ including the corresponding incoming and outgoing edges.

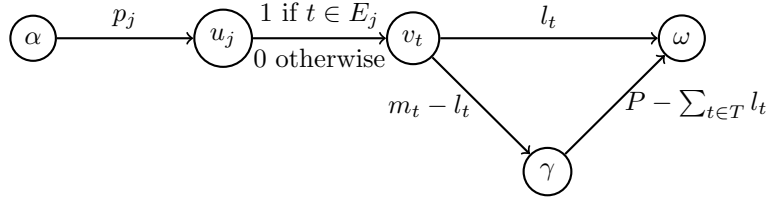
\begin{figure}
  \centering
  \begin{tikzpicture}[node distance={25mm}, thick, main/.style = {draw, circle}]
    \node[main] (1) {$\alpha$};
    \node[main] (2) [right of=1] {$u_j$};
    \node[main] (3) [right of=2] {$v_t$};
    \node[main] (4) [below right of=3] {$\gamma$};
    \node[main] (5) [above right of=4] {$\omega$};
    \draw[->] (1) -- node[midway, above] {$p_j$} (2);
    \draw[->] (2) -- node[midway, above] {$1$ if $t \in E_j$} node[midway, below] {$0$ otherwise} (3);
    \draw[->] (3) -- node[midway, left] {$m_t - l_t$} (4);
    \draw[->] (3) -- node[midway, above] {$l_t$} (5);
    \draw[->] (4) -- node[midway, right] {$P - \sum_{t \in T} l_t$} (5);
  \end{tikzpicture}
  \caption{The Flow-Network for checking feasibility of an instance of \emph{deadline-scheduling-with-processor-bounds} $l_t$ and $m_t$ for the number of busy processors at $t \in T$.
  There is a node $u_j$, $v_t$ with the corresponding edges for every job $j \in J$ and for every time slot $t \in T$, respectively.
  }\label{fig:flow}
\end{figure}

\begin{restatable}{lemma}{lemmaflowfeasibility}\label{lemma:flow_feasibility}
  There exists a feasible solution to an instance of deadline-scheduling-with-processor-bounds $l_t$, $m_t$ if and only if the maximum $\alpha$-$\omega$ flow in the corresponding flow network depicted in Figure~\ref{fig:flow} has value $P$.
\end{restatable}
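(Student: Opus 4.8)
The plan is to prove both implications by setting up an explicit bijective-style correspondence between feasible solutions of the \emph{deadline-scheduling-with-processor-bounds} instance and integral $\alpha$-$\omega$ flows of value $P$, using that a maximum flow can be taken integral since all capacities are integral. I would first record the trivial observations that the total capacity of the edges leaving $\alpha$ is $\sum_{j \in J} p_j = P$, so the maximum flow value is always at most $P$, and that for a feasible instance $\sum_{t \in T} l_t \le P$ (otherwise both sides of the claimed equivalence fail), so the capacity $P - \sum_{t \in T} l_t$ on the edge $\gamma \to \omega$ is well-defined and nonnegative.

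For the direction ``feasible schedule $\Rightarrow$ flow of value $P$'': given a feasible schedule, I would push $p_j$ units along $\alpha \to u_j$ and one unit along $u_j \to v_t$ exactly for those slots $t$ at which some processor works on job $j$; these satisfy $t \in E_j$, and there is at most one such unit per slot because no two processors may share a job, so the capacities on the $u_j \to v_t$ edges are respected and job $j$ is processed for $\sum_t f(u_j,v_t) = p_j$ slots. Then exactly $\vol(t)$ units arrive at $v_t$, and since $l_t \le \vol(t) \le m_t$ I can route $l_t$ of them along $v_t \to \omega$ and the remaining $\vol(t) - l_t \le m_t - l_t$ along $v_t \to \gamma$; the total reaching $\gamma$ is $\sum_t(\vol(t) - l_t) = P - \sum_t l_t$, matching the capacity of $\gamma \to \omega$ exactly. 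Flow conservation then holds at every $u_j$, every $v_t$, and at $\gamma$, so this is a valid flow of value $P$, hence maximum.

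For the converse ``flow of value $P$ $\Rightarrow$ feasible schedule'': take an integral maximum flow of value $P$. Since its value equals the capacity leaving $\alpha$, every edge $\alpha \to u_j$ is saturated, so $u_j$ forwards exactly $p_j$ units, each along a distinct edge $u_j \to v_t$ with $t \in E_j$. Declaring job $j$ to be worked on in slot $t$ iff that edge carries flow yields an assignment in which every job is processed for exactly $p_j$ slots inside $E_j$ and is worked on at most once per slot. Writing $\vol(t)$ for the number of jobs assigned to slot $t$, conservation at $v_t$ gives $\vol(t) = f(v_t,\gamma) + f(v_t,\omega) \le (m_t - l_t) + l_t = m_t$. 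For the lower bound I would use maximality: the flow value $P$ equals the total flow into $\omega$, which is $\sum_t f(v_t,\omega) + f(\gamma,\omega) \le \sum_t l_t + \bigl(P - \sum_t l_t\bigr) = P$, so every inequality here is tight; in particular $f(v_t,\omega) = l_t$, whence $\vol(t) = f(v_t,\gamma) + l_t \ge l_t$. Finally, since $m_t \le m$, at most $m$ jobs are assigned to any slot, so I can make processors $1,\dots,\vol(t)$ busy in slot $t$ and assign the slot's jobs to them arbitrarily (preemption and migration across processors are free at discrete times), obtaining a feasible schedule that also satisfies the stair-property of Definition~\ref{def:stair_property}.

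The bulk of the work is routine flow-conservation bookkeeping. The one step that needs a genuine (if small) idea is the lower-bound argument in the converse direction: recognizing that having flow value exactly $P$ forces the cut consisting of the edges $v_t \to \omega$ together with $\gamma \to \omega$ to be fully saturated, which is precisely what pins $\vol(t) \ge l_t$. It is worth emphasizing why the node $v_t$ (rather than connecting $u_j$ directly to $\gamma$/$\omega$) is essential, namely the unit capacity on $u_j \to v_t$ encodes the constraint that at most one processor works on a given job in a given slot.
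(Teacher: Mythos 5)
Your proposal is correct and follows essentially the same argument as the paper: the same explicit flow construction from a feasible schedule, and in the converse direction the same saturation argument (edges out of $\alpha$ and the cut into $\omega$ of capacity $\sum_t l_t + P - \sum_t l_t = P$ must be tight, forcing $f(v_t,\omega)=l_t$ and hence $\vol(t) \geq l_t$). Your additional remarks on integrality of the maximum flow and on $\sum_t l_t \leq P$ are fine details the paper leaves implicit.
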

\begin{theorem}\label{theorem:feasibility}
  Given a feasible problem instance, algorithm $\PLTR$ constructs a feasible schedule.
\end{theorem}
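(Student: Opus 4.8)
The plan is to prove two things: that $\PLTR$ halts, and that the bounds $(l_t,m_t)$ it leaves behind admit a feasible solution of \emph{deadline-scheduling-with-processor-bounds}, which is in particular a feasible schedule of the original instance because the bounds only shrink the set of admissible schedules (we always have $0 \le l_t$ and $m_t \le m$). Both parts rest on a single invariant: \emph{after every execution of $\keepidle$ or $\keepbusy$, the current bounds instance $(l_t,m_t)$ is feasible.} I would prove this by induction on the operations performed. For the base case, the initial bounds $m_t=m$, $l_t=0$ are met by any feasible schedule of the original instance (since $0 \le \vol(t) \le m$ always), and such a schedule exists by hypothesis. For the inductive step, $\keepidle$ sets $m_{t''}\gets k-1$ on an interval $[t,t')$ with $t'$ chosen \emph{maximal} subject to the modified instance remaining feasible (and if no $t'>t$ qualifies it changes nothing and returns $t$), and $\keepbusy$ behaves symmetrically with $l_{t''}\gets\max\{k,l_{t''}\}$; either way the resulting instance is feasible, so the invariant persists.

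The main obstacle is termination, i.e.\ showing that each full pass of the \textbf{while} loop increases $t$ by at least one. Fix the value $t\le d$ at the start of a pass for processor $k$. If $l_t\ge k$, then $\keepidle$ cannot set $m_t=k-1<l_t$ and returns $t$; but then at slot $t$ the assignment $l_t\gets\max\{k,l_t\}=l_t$ done by $\keepbusy$ is a no-op, hence feasibility-preserving, so $\keepbusy$ returns at least $t+1$. If instead $l_t\le k-1$, then either $\keepidle$ already extends past $t$, or it returns $t$, which by maximality means that setting $m_t=k-1$ is infeasible; since the current instance is feasible (by the invariant), \emph{every} feasible schedule of it then has $\vol(t)\ge k$, so raising $l_t$ to $k$ keeps the instance feasible and $\keepbusy$ returns at least $t+1$. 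In the remaining case, where $\keepidle$ returns $t'>t$, the same reasoning applies at $t'$: after $\keepidle$ has tightened $m$ to $k-1$ on $[t,t')$, maximality of $t'$ says that additionally forcing $m_{t'}=k-1$ is infeasible, so either $l_{t'}\ge k$ (then $\keepbusy$'s step at $t'$ is a no-op) or every feasible schedule of the post-$\keepidle$ instance has $\vol(t')\ge k$ (then $\keepbusy$ may raise $l_{t'}$ to $k$); in both situations $\keepbusy$ returns at least $t'+1>t$. Thus $t$ strictly increases every pass, the \textbf{while} loop runs at most $d+1$ times per processor, and $\PLTR$ terminates.

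Finally, at termination the invariant supplies a feasible bounds instance, so by Lemma~\ref{lemma:flow_feasibility} the associated flow network has a maximum $\alpha$–$\omega$ flow of value $P$; the flow-to-schedule construction described in that lemma reads off an assignment of jobs to processors and time slots that respects every release time and deadline, never lets two processors share a job in one slot, and satisfies $l_t\le\vol(t)\le m_t$. Since $l_t\ge 0$ and $m_t\le m$ throughout, this assignment is a feasible schedule of the original instance, namely $S_{\pltr}$. I expect the write-up to be short except for the termination step, where the case analysis above (and the precise reading of "maximal $t'$" when no extension is possible) is the only delicate point.
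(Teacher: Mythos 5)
Your proposal is correct and takes essentially the same approach as the paper: the bounds instance $(l_t,m_t)$ remains feasible by the very definition of $\keepidle$ and $\keepbusy$, and Lemma~\ref{lemma:flow_feasibility} turns the final feasible bounds instance into an actual schedule that respects $0 \leq l_t$ and $m_t \leq m$. Your extra termination argument is sound but goes beyond the paper's proof of this theorem, which leaves such progress/counting considerations to the running-time analysis in Theorem~\ref{theorem:running_time}.
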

\begin{proof}
  By definition of subroutines $\keepidle$ and $\keepbusy$, $\PLTR$ only modifies the upper and lower bounds $m_t$, $l_t$ for the number of busy processors such that the resulting instance of \emph{deadline-scheduling-with-processor-bounds} remains feasible.
  The correctness of the algorithm then follows from the correctness of the flow-calculation for checking feasibility, which is implied by Lemma~\ref{lemma:flow_feasibility}.
\end{proof}

\section{Structure of the PLTR-Schedule}\label{section:structure}
\subsection{Types of Volume}
\begin{definition}
  For a schedule $S$, a job $j \in J$, and a set $Q \subseteq T$ of time slots, we define
  \begin{enumerate}
    \item
      the \emph{volume} $\vol_S(j, Q)$ as the number of time slots of $Q$ for which $j$ is scheduled by $S$,
    \item
      the \emph{forced volume} $\fv(j, Q)$ as the minimum number of time slots of $Q$ for which $j$ has to be scheduled in every feasible schedule, i.e.\ $\fv(j, Q) \coloneqq \max\{0; p_j - |E_j \setminus Q|\}$,
    \item
      the \emph{unnecessary volume} $\uv_S(j, Q)$ as the amount of volume which does not have to scheduled during $Q$, i.e.\ $\uv_S(j, Q) \coloneqq \vol_S(j, Q) - \fv(j,Q)$,
    \item
      the \emph{possible volume} $\pv(j, Q)$ as the maximum amount of volume which $j$ can be feasibly scheduled in $Q$, i.e.
    $\pv(j, Q) \coloneqq \min\{ p_j, | E_j \cap Q | \}$.
  \end{enumerate}
\end{definition}
Since the corresponding schedule $S$ will always be clear from context, we omit the subscript for $\vol$ and $\uv$.
We extend our volume definitions to sets $J' \subseteq J$ of jobs by summing over all $j \in J'$, i.e.
$\vol(J', Q) \coloneqq \sum_{j \in J'} \vol(j, Q)$.
If the first parameter is omitted, we refer to the whole set $J$, i.e.\ $\vol(Q) \coloneqq \vol(J, Q)$.
For single time slots, we omit set notation, i.e.\ $\vol(t) \coloneqq \vol(J, \{t\})$.
Clearly we have for every feasible schedule, every $Q \subseteq T, j \in J$ that $\fv(j, Q) \leq \vol(j, Q) \leq \pv(j, Q)$.
The following definitions are closely related to these types of volume.
\begin{definition}
  Let $Q \subseteq T$ be a set of time slots.
  We define
  \begin{enumerate}
    \item
      the \emph{density} $\phi(Q) \coloneqq \fv(J, Q) / |Q|$ as the average amount of processing volume which has to be completed in every slot of $Q$,
    \item
      the \emph{peak density} $\hat \phi(Q) \coloneqq \max_{Q' \subseteq Q} \phi(Q')$,
    \item
      the \emph{deficiency} $\opdef(Q) \coloneqq \fv(Q) - \sum_{t \in Q} m_t$ as the difference between the amount of volume which has to be completed in $Q$ and the processing capacity available in $Q$,
    \item
      the \emph{excess} $\exc(Q) \coloneqq \sum_{t \in Q} l_t - \pv(Q)$ as the difference between the processor utilization required in $Q$ and the amount of work available in $Q$.
  \end{enumerate}
\end{definition}
If $\hat \phi(Q) > k - 1$, then clearly at least $k$ processors are required in some time slot $t \in Q$ for every feasible schedule.
If $\opdef(Q) > 0$ or $\exc(Q) > 0$ for some $Q \subseteq T$, then the problem instance is clearly infeasible.

\subsection{Critical Sets of Time Slots}
The following Lemma~\ref{lemma:critical} provides the crucial structure required for the proof of the approximation guarantee.
Intuitively, it states that whenever $\PLTR$ requires processor $k$ to become busy at some time slot $t$, there must be some critical set $Q \subseteq T$ of time slots during which the volume scheduled by $\PLTR$ is minimal.
This in turn implies that processor $k$ needs to be busy at some point during $Q$ in every feasible schedule.
The auxiliary Lemmas~\ref{lemma:cut} and~\ref{lemma:feasibility} provide a necessary and more importantly also sufficient condition for the feasibility of an instance of \emph{deadline-scheduling-with-processor-bounds} based on the excess $\exc(Q)$ and the deficiency $\opdef(Q)$ of sets $Q \subseteq T$.
Lemmas~\ref{lemma:cut} and~\ref{lemma:feasibility} are again a generalization of the corresponding feasibility characterization in~\cite{antoniadis} for their problem deadline-scheduling-on-intervals, which only defines upper bounds.
\begin{restatable}{lemma}{lemmacut}\label{lemma:cut}
  For every $\alpha$-$\omega$ cut $(S, \bar S)$ in the network given in Figure~\ref{fig:flow} we have at least one of the following two lower bounds for the capacity $c(S)$ of the cut:
  $c(S) \geq P - \opdef(Q(S))$ or $c(S) \geq P - \exc(Q(\bar S))$, where $Q(S) \coloneqq \{ t \mid v_t \in S \}$.
\end{restatable}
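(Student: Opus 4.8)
The plan is to analyze an arbitrary $\alpha$-$\omega$ cut $(S,\bar S)$ by classifying which edges of the network in Figure~\ref{fig:flow} cross it, and then to split into two cases according to whether the auxiliary node $\gamma$ lies in $S$ or in $\bar S$. Write $Q = Q(S) = \{t : v_t \in S\}$ and $\bar Q = Q(\bar S) = T \setminus Q$. Let $J_S = \{j : u_j \in S\}$ and $J_{\bar S} = J \setminus J_S$. The edges crossing from $S$ to $\bar S$ are: the source edges $\alpha \to u_j$ for $j \in J_{\bar S}$ (contributing $\sum_{j \in J_{\bar S}} p_j$), the job-time edges $u_j \to v_t$ for $j \in J_S$, $t \in \bar Q$ with $t \in E_j$ (each contributing $1$), the edges $v_t \to \gamma$ and $v_t \to \omega$ for $t \in Q$ (contributing $m_t - l_t$ and $l_t$ respectively, depending on the side of $\gamma$ and $\omega$), and possibly $\gamma \to \omega$. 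Since $\omega \in \bar S$ always, the edges $v_t \to \omega$ for $t \in Q$ always cross.

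\emph{Case $\gamma \in \bar S$.} Then for every $t \in Q$ both $v_t \to \gamma$ and $v_t \to \omega$ cross, contributing $(m_t - l_t) + l_t = m_t$ per slot, so these edges contribute $\sum_{t \in Q} m_t$. The edge $\gamma \to \omega$ does not cross. Hence
\begin{equation}
  c(S) \;=\; \sum_{j \in J_{\bar S}} p_j \;+\; \sum_{j \in J_S} |E_j \cap \bar Q| \;+\; \sum_{t \in Q} m_t.
\end{equation}
I would lower-bound the middle term by $\sum_{j \in J_S}(p_j - |E_j \cap Q|) \le \sum_{j \in J_S}|E_j \cap \bar Q|$ — more carefully, $|E_j \cap \bar Q| \ge p_j - |Q \cap E_j| \ge p_j - \big(\text{something}\big)$; the clean bound is $|E_j \cap \bar Q| = |E_j| - |E_j \cap Q| \ge p_j - |E_j\cap Q|$ only when $|E_j| \ge p_j$, which holds by feasibility assumptions, but the term we actually want is $\fv(j,Q)$. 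Using $|E_j \cap \bar Q| \ge \fv(j, \bar Q)$-type reasoning one shows the first two sums together are at least $P - \sum_{j} (\text{volume forced into } Q)$; summing $\sum_j p_j = P$ and recognizing $\sum_j \min\{p_j, |E_j \cap Q|\}$-style terms, the bound collapses to $c(S) \ge P - \big(\fv(J,Q) - \sum_{t\in Q} m_t\big) = P - \opdef(Q)$.

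\emph{Case $\gamma \in S$.} Then $\gamma \to \omega$ crosses, contributing $P - \sum_{t \in T} l_t$; for $t \in Q$ the edge $v_t \to \gamma$ does not cross but $v_t \to \omega$ does, contributing $\sum_{t\in Q} l_t$; and for $t \in \bar Q$ the edge $v_t \to \gamma$ crosses, contributing $\sum_{t \in \bar Q}(m_t - l_t)$. Collecting terms and using $\sum_{t\in T} l_t = \sum_{t\in Q} l_t + \sum_{t\in\bar Q} l_t$, the $l_t$-contributions on $Q$ telescope against part of $\gamma\to\omega$, leaving $c(S) = P - \sum_{t \in \bar Q} l_t + \sum_{j\in J_S}|E_j\cap\bar Q| + \sum_{t\in\bar Q}(m_t - l_t)$ — and then I would bound $\sum_{j\in J_S}|E_j \cap \bar Q| \ge \sum_j \min\{p_j,|E_j\cap\bar Q|\} - (\text{jobs on the }\bar S\text{ side contribution})$, pushing everything through $\pv(J,\bar Q) = \sum_j \min\{p_j, |E_j \cap \bar Q|\}$ to arrive at $c(S) \ge P - \big(\sum_{t\in\bar Q} l_t - \pv(J,\bar Q)\big) = P - \exc(\bar Q)$.

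The main obstacle will be bookkeeping in the two cases so that the job-side sums $\sum_{j\in J_S}|E_j\cap\bar Q|$ (and the symmetric quantity for $J_{\bar S}$) are correctly matched against $\fv$ respectively $\pv$: one must choose, for each job, the more favorable of "$u_j \in S$" versus "$u_j \in \bar S$" when deriving the bound, which is exactly what makes the cut capacity at least the stated value rather than equal to it. Concretely, for the deficiency bound one argues that a job contributes at least its forced volume $\fv(j,Q)$ no matter which side $u_j$ is on, and for the excess bound that a job on the "wrong" side cannot contribute more than $\pv(j,\bar Q)$; formalizing these per-job case distinctions and then summing is the crux. Everything else is edge-counting in a fixed small network plus the algebraic identities $\fv(j,Q) = \max\{0, p_j - |E_j\setminus Q|\}$ and $\pv(j,Q) = \min\{p_j, |E_j\cap Q|\}$ from the preceding definitions.
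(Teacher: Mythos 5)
Your overall plan---classify the crossing edges of the directed cut, split on whether $\gamma\in S$ or $\gamma\in\bar S$, and finish with per-job inequalities against $\fv$ and $\pv$---is the same as the paper's proof, and your first case essentially works: the clean inequality you are circling around is $|E_j\setminus Q|\geq p_j-\fv(j,Q)$, which is immediate from $\fv(j,Q)=\max\{0,\,p_j-|E_j\setminus Q|\}$ and needs no appeal to feasibility or to $|E_j|\geq p_j$ (this matters, because Lemma~\ref{lemma:feasibility} applies the present lemma to \emph{infeasible} instances). Summing $p_j$ over $J_{\bar S}$, $p_j-\fv(j,Q)$ over $J_S$, adding $\sum_{t\in Q}m_t$, and using $\fv(J_S,Q)\leq\fv(J,Q)$ gives $c(S)\geq P-\opdef(Q)$.

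The case $\gamma\in S$, however, would fail as written. In a directed cut only edges oriented from $S$ to $\bar S$ count; the edges $v_t\to\gamma$ with $t\in\bar Q$ run from $\bar S$ into $S$ and contribute nothing, so your identity $c(S)=P-\sum_{t\in\bar Q}l_t+\sum_{j\in J_S}|E_j\cap\bar Q|+\sum_{t\in\bar Q}(m_t-l_t)$ overstates the capacity, and a lower bound derived from an overstated capacity proves nothing about the true $c(S)$. Moreover you dropped the source contribution $\sum_{j\in J_{\bar S}}p_j$, which is exactly what is needed: the target $P-\exc(\bar Q)=P-\sum_{t\in\bar Q}l_t+\pv(J,\bar Q)$ requires the \emph{full} $\pv(J,\bar Q)$, assembled job by job via $p_j\geq\pv(j,\bar Q)$ for $j\in J_{\bar S}$ (source edge) and $|E_j\cap\bar Q|\geq\pv(j,\bar Q)$ for $j\in J_S$ (job--slot edges), both immediate from $\pv(j,\bar Q)=\min\{p_j,|E_j\cap\bar Q|\}$. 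Your closing per-job statements also point the wrong way for a lower bound: you need ``each job contributes at least $p_j-\fv(j,Q)$'' (not ``at least $\fv(j,Q)$'') in the first case, and ``at least $\pv(j,\bar Q)$'' (not ``at most'') in the second. With those corrections, the true capacity $c(S)=\sum_{j\in J_{\bar S}}p_j+\sum_{j\in J_S}|E_j\cap\bar Q|+\sum_{t\in Q}l_t+P-\sum_{t\in T}l_t$ yields $c(S)\geq P-\exc(\bar Q)$, exactly as in the paper.
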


\begin{restatable}{lemma}{lemmafeasibility}\label{lemma:feasibility}
  An instance of deadline-scheduling-with-processor-bounds is feasible if and only if $\opdef(Q) \leq 0$ and $\exc(Q) \leq 0$ for every $Q \subseteq T$.
\end{restatable}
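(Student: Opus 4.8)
The plan is to derive the feasibility characterization directly from the max-flow/min-cut theorem together with the cut lower bounds established in Lemma~\ref{lemma:cut}. By Lemma~\ref{lemma:flow_feasibility}, the instance is feasible precisely when the maximum $\alpha$-$\omega$ flow has value $P$; since $P$ is trivially an upper bound for the flow value (the edges leaving $\alpha$ have total capacity $\sum_{j} p_j = P$), feasibility is equivalent to the statement that \emph{every} $\alpha$-$\omega$ cut has capacity at least $P$. So the whole lemma reduces to showing: every cut has capacity $\geq P$ if and only if $\opdef(Q) \leq 0$ and $\exc(Q) \leq 0$ for all $Q \subseteq T$.

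For the ``if'' direction I would take an arbitrary cut $(S,\bar S)$ and apply Lemma~\ref{lemma:cut}, which gives $c(S) \geq P - \opdef(Q(S))$ or $c(S) \geq P - \exc(Q(\bar S))$. Under the hypotheses $\opdef(\cdot) \leq 0$ and $\exc(\cdot) \leq 0$ everywhere, both right-hand sides are $\geq P$, so $c(S) \geq P$; hence the min cut is $P$, the max flow is $P$, and the instance is feasible. For the ``only if'' direction I would argue contrapositively: if $\opdef(Q) > 0$ for some $Q$, then in any putative schedule the jobs that are forced into $Q$ require $\fv(Q)$ units of busy-processor-time inside $Q$, but only $\sum_{t \in Q} m_t < \fv(Q)$ units are available, a contradiction; symmetrically, if $\exc(Q) > 0$ then the lower bounds demand $\sum_{t\in Q} l_t$ units of work be done in $Q$ while at most $\pv(Q) < \sum_{t\in Q} l_t$ units of work can be placed there. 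This is exactly the remark already made right after the definition of $\opdef$ and $\exc$, so the direction is essentially a matter of spelling it out. Alternatively, one can stay purely on the flow side: a violating $Q$ yields an explicit cut of capacity $< P$ (take $S$ to contain $\alpha$, all $u_j$ with $E_j \subseteq Q$, and all $v_t$ with $t \in Q$ for the deficiency case, and the mirror construction for the excess case), which again certifies infeasibility via min-cut.

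The main obstacle is not the logical skeleton above but making sure Lemma~\ref{lemma:cut}'s dichotomy is invoked correctly and that the two cases genuinely cover every cut — in particular checking that ``mixed'' cuts, where the partition of the $u_j$ nodes is not dictated by $Q(S)$, are still handled (this is presumably where the proof of Lemma~\ref{lemma:cut} does its work by observing it is never beneficial to cut a $u_j$ edge when the adjacent $v_t$ nodes could instead be moved). I expect the write-up to be short: state the equivalence with ``min cut $= P$'', dispatch ``if'' via Lemma~\ref{lemma:cut} in two lines, and dispatch ``only if'' either by the counting argument or by exhibiting the explicit small cut. No delicate estimates are needed; the content has been front-loaded into Lemmas~\ref{lemma:flow_feasibility} and~\ref{lemma:cut}.
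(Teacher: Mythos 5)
Your proposal is correct and matches the paper's own proof in essence: the easy direction is the same counting observation about $\fv(Q)$ versus $\sum_{t\in Q} m_t$ and $\sum_{t\in Q} l_t$ versus $\pv(Q)$, and the nontrivial direction is exactly the combination of Lemma~\ref{lemma:flow_feasibility} with the cut bounds of Lemma~\ref{lemma:cut} (the paper merely phrases it contrapositively, going from infeasibility to a cut of capacity below $P$, rather than arguing directly that every cut has capacity at least $P$). Your worry about ``mixed'' cuts is indeed already absorbed into Lemma~\ref{lemma:cut}, so nothing further is needed.
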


\begin{definition}
  A time slot $t \in T$ is called \emph{engagement of processor $k$} if $t = \min B$ for some busy interval $B$ on processor $k$.
  A time slot $t \in T$ is just called \emph{engagement} if it is an engagement of processor $k$ for some $k \in [m]$.
\end{definition}
\begin{lemma}\label{lemma:critical}
  Let $Q \subseteq T$ be a set of time slots and $t \in T$ an engagement of processor $k \in [m]$.
  We call $Q$ a \emph{tight set for engagement $t$ of processor $k$} if $t \in Q$ and
  \begin{align}
    \fv(Q) &= \vol(Q) \text{,}\\
    \vol(t') &\geq k-1 &~\text{for all}~t' \in Q~\text{, and}\\
    \vol(t') &\geq k &~\text{for all}~t' \in Q~\text{with}~t' \geq t \text{.}
  \end{align}
  For every engagement $t$ of some processor $k \in [m]$ in the schedule $S_{\pltr}$ constructed by $\PLTR$, there exists a tight set $Q_t \subseteq T$ for engagement $t$ of processor $k$.
\end{lemma}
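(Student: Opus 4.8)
The plan is to fix an engagement $t$ of processor $k$ in $S_{\pltr}$ and extract the tight set $Q_t$ from the failure of the feasibility check that forced processor $k$ to stop being idle at $t$. Recall that when $\PLTR$ processes processor $k$ and calls $\keepidle(k, t_0)$ for some $t_0 \leq t$, it advances as far right as possible keeping $m_{t''} = k-1$; the fact that it could not keep processor $k$ idle past $t-1$ means that setting $m_{t''} = k-1$ on $[t_0, t]$ (with the bounds already committed by earlier, higher-numbered processors and by earlier phases on processor $k$) produces an \emph{infeasible} instance. By Lemma~\ref{lemma:feasibility}, infeasibility means there is some witnessing set: either a set $R$ with $\opdef(R) > 0$ or a set $R$ with $\exc(R) > 0$, computed with respect to the would-be bounds. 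Since the instance was feasible just before lowering $m_t$ to $k-1$, the witness must be a deficiency set containing $t$, i.e.\ a set $R \ni t$ with $\fv(R) > \sum_{t' \in R} m_{t'}$, where $m_{t'} = k-1$ for the slots in $R$ on which $\keepidle$ had just tried to impose that bound.

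Next I would argue that one can take this witness $R$ to be entirely contained in the region where $m_{t'} = k-1$, by a standard "peel off the high-capacity slots" argument: removing from $R$ any slot $t'$ with $m_{t'} \geq k$ only decreases $\sum m_{t'}$ by at least $k$ while decreasing $\fv(R)$ by at most $k$ (since $\fv$ of a single slot contribution is bounded by the number of jobs alive there, but more carefully, $\fv(R) - \fv(R \setminus \{t'\}) \le$ the number of jobs $j$ with $E_j \subseteq R$ adjusted — one has to phrase this via $\fv(j,R) - \fv(j,R\setminus\{t'\}) \le 1$), so the deficiency does not decrease. Hence we may assume $m_{t'} = k-1$ for all $t' \in R$, which after the update means every slot of $R$ had its upper bound pinned to $k-1$ by $\PLTR$. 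The key consequence is that $\PLTR$ schedules \emph{exactly} $k-1$ units in every such slot it pinned — actually at most $k-1$ from the cap and, by maximality of the flow (value $P$), the deficiency $\opdef(R) \ge 1$ in the hypothetical instance combined with feasibility of the committed instance forces $\vol_{S_{\pltr}}(t') = k-1$ for $t' \in R$; and since the engagement at $t$ means processor $k$ is busy at $t$, we get $\vol(t) \ge k$, which pins down the asymmetry at $t$. I would then set $Q_t := R$ (or a suitable sub/superset) and verify the three displayed conditions: $\fv(Q_t) = \vol(Q_t)$ follows because $\vol(t') = k-1$ on $Q_t$ forces $\sum_{t'\in Q_t}\vol(t') = (k-1)|Q_t| < \fv(Q_t) + |Q_t|$, i.e.\ combined with $\fv \le \vol$ one squeezes equality (this needs the deficiency to be exactly $1$, which may require shrinking $R$ to a minimal deficiency-$1$ witness); $\vol(t') \ge k-1$ on $Q_t$ is immediate; and $\vol(t') \ge k$ for $t' \ge t$ in $Q_t$ requires showing that the slots of $Q_t$ to the right of $t$ are actually covered by $k$ busy processors, which follows from the stair-property together with the fact that processor $k$, once engaged at $t$, stays busy until $\keepbusy$ releases it, and $Q_t$ cannot extend past that release point without contradicting minimality of the witness.

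The main obstacle I anticipate is the bookkeeping of \emph{which} bounds are in effect at the moment of the engagement: $\PLTR$ has already committed $m_{t'} \le k-1$ and $l_{t'} \ge$ various values on earlier portions of processor $k$'s timeline and on processors $k+1, \dots, m$, and the witness set $R$ delivered by Lemma~\ref{lemma:feasibility} is taken in the \emph{hypothetical} instance (with $m_{t''} = k-1$ extended one slot further). Disentangling these so that $R$ genuinely certifies $\vol_{S_{\pltr}}(t') = k-1$ on all of $Q_t$ in the \emph{actual} final schedule, rather than in some intermediate instance, is the delicate part — one must use that lowering $m$ only ever removes scheduling freedom, so a deficiency witness at an intermediate stage remains a deficiency-tight region in the final schedule, and that the final $\vol(t')$ cannot exceed the last committed $m_{t'} = k-1$. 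A secondary subtlety is ensuring the minimal deficiency-$1$ witness can be chosen to contain $t$ and to be "aligned" so that all its right-of-$t$ slots see $k$ busy processors; I expect this to use the stair-property (Lemma~\ref{lemma:stair_property_opt} gives it for $S_{\opt}$, but Lemma~\ref{lemma:flow_feasibility} gives it for $S_{\pltr}$ as noted in Section~\ref{section:algorithm}) plus a short argument that $\keepidle$'s maximality places $t$ at the left boundary of the pinned region relevant to the witness.
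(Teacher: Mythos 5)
Your starting point is the same as the paper's (the moment $\keepidle(k,\cdot)$ stops at $t$, plus the feasibility characterization of Lemma~\ref{lemma:feasibility}), but the logical route differs in a way that matters. The paper never constructs a tight set from an infeasibility witness; it argues the contrapositive: assuming \emph{no} tight set exists for $(t,k)$, it checks that every $Q \ni t$ must violate one of the three tightness conditions, and in each case the slack this provides (using $m_{t'} \geq \vol(t')$ by monotonicity of the bounds, $m_{t'} \geq k-1$ everywhere and $m_{t'} \geq k$ for $t' > t$) shows $\sum_{t' \in Q} m_{t'} \geq \fv(Q)$ survives lowering $m_t$ by one, so by Lemma~\ref{lemma:feasibility} the idle interval could have been extended, a contradiction. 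Your plan instead extracts a deficiency witness $R \ni t$ for the hypothetical bounds and tries to massage it into a tight set, and it is exactly this massaging that does not go through.

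Concretely: (a) the peeling step is unjustified — removing a slot $t'$ with $m_{t'} \geq k$ from $R$ lowers $\fv(R)$ by up to the number of jobs $j$ with $t' \in E_j$ and $\fv(j,R) > 0$, which is not bounded by $k$ (or even $m$), so "the deficiency does not decrease" does not follow. (b) Even a witness confined to the pinned region need not satisfy the tight-set conditions with respect to the \emph{final} schedule: slots whose upper bound is $k-1$ at this stage can later be pinned further by processors $k-1,\ldots,1$, so $\vol(t')$ may be strictly below $k-1$ there (your "$\vol(t') \geq k-1$ is immediate" conflates the current upper bound with the realized volume); moreover $\vol(t)$ can exceed $k$ (processors above $k$ may be busy at $t$), so your squeeze toward $\fv(Q_t)=\vol(Q_t)$ via a deficiency-exactly-one witness only yields $\vol(t) \geq k$, not the needed equality. (c) The third condition, $\vol(t') \geq k$ for $t' > t$ in $Q_t$, is not delivered by "minimality of the witness"; nothing prevents a minimal deficient set from containing later low-volume slots. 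These are precisely the obstacles the paper's contrapositive sidesteps: it never needs any particular witness to be tight, only that a set failing one of the three conditions cannot become deficient when $m_t$ is decremented. As it stands, your argument has genuine gaps at each of the three displayed conditions, so it does not yet establish the lemma.
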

\begin{proof}
  Suppose for contradiction that there is some engagement $t \in T$ of processor $k \in [m]$ and no such $Q$ exists for $t$.
  We show that $\PLTR$ would have extended the idle interval on processor $k$ which ends at $t$.
  Consider the step in $\PLTR$ when $t$ was the result of $\keepidle$ on processor $k$.
  Let $l_{t'}$, $m_{t'}$ be the lower and upper bounds for $t' \in T$ right after the calculation of $t$ and the corresponding update of the bounds by $\keepidle$.
  We modify the bounds by decreasing $m_t$ by $1$.
  Note that at this point $m_{t'} \geq k$ for every $t' > t$ and $m_{t'} \geq k - 1$ for every $t'$.

  Consider $Q \subseteq T$ such that $t \in Q$ and $\fv(Q) < \vol(Q)$.
  Before our decrement of $m_t$ we had $m_Q \coloneqq \sum_{t' \in Q} m_{t'} \geq \vol(Q) > \fv(Q)$.
  The inequality $m_Q \geq \vol(Q)$ here follows since the upper bounds $m_{t'}$ are monotonically decreasing during $\PLTR$.
  Since our modification decreases $m_Q$ by at most $1$, we hence still have $m_Q \geq \fv(Q)$ after the decrement of $m_t$.
  Consider $Q \subseteq T$ such that $t \in Q$ and $\vol(t') < k - 1$ for some $t'$.
  At the step in $\PLTR$ considered by us, we hence have $m_{t'} \geq k - 1 > \vol(t')$.
  Before our decrement of $m_t$ we therefore have $m_Q > \vol(Q) \geq \fv(Q)$, which implies $m_Q \geq \fv(Q)$ after the decrement.
  Finally, consider $Q \subseteq T$ such that $t \in Q$ and $\vol(t') < k$ for some $t' > t$.
  At the step in $\PLTR$ considered by us, we again have $m_{t'} \geq k > \vol(t')$, which implies $m_Q \geq \fv(Q)$ after our decrement of $m_t$.
  In summary, if for $t$ no $Q$ exists as characterized in the proposition, the engagement of processor $k$ at $t$ could not have been the result of $\keepidle$ on processor $k$.
\end{proof}

\begin{lemma}
  We call a set $C_k \subseteq T$ \emph{critical set for processor $k$} if $C_k$ fulfills that
  \begin{itemize}
    \item
      $C_k \supseteq C_{k'}$ for every critical set for processor $k' > k$,
    \item
      $t \in C_k$ for every engagement $t$ of processor $k$,
    \item
      $\fv(C_k) = \vol(C_k)$,
    \item
      $\vol(t) \geq k - 1$ for every $t \in C_k$, and
    \item
      $\phi(C_k)$ is maximal.
  \end{itemize}
  For every processor $k \in [m]$ of $S_{\pltr}$ which is not completely idle, there exists a critical set $C_k$ for processor $k$.
\end{lemma}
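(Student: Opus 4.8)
The plan is a downward induction on $k$ from $m$ to $1$, whose workhorse is a \emph{union lemma}: if $Q_1,Q_2\subseteq T$ both satisfy $\fv(Q_i)=\vol(Q_i)$, then $\fv(Q_1\cup Q_2)=\vol(Q_1\cup Q_2)$ as well (and likewise for $Q_1\cap Q_2$). Granting this, I would build a critical set for processor $k$ by taking the union of (i) all critical sets of the higher-numbered processors, which exist by the induction hypothesis, together with (ii) for every engagement $t$ of processor $k$, the tight set $Q_t$ furnished by Lemma~\ref{lemma:critical}. The union lemma makes this union tight, the other four defining conditions are then immediate, and maximality of $\phi$ (the fifth condition) comes for free because $T$ is finite.

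First I would establish the union lemma. For a fixed schedule and a fixed job $j$, the map $Q\mapsto\vol(j,Q)$ just counts the slots of $Q$ lying in a fixed set and is therefore modular in $Q$; summing over $j$, $\vol(\cdot)$ is modular. Likewise $Q\mapsto|E_j\setminus Q|$ is modular, so $Q\mapsto p_j-|E_j\setminus Q|$ is modular and its truncation $\fv(j,Q)=\max\{0,\,p_j-|E_j\setminus Q|\}$ is supermodular; summing over $j$, $\fv(\cdot)$ is supermodular. Using modularity of $\vol$, the hypothesis, supermodularity of $\fv$, and the pointwise bound $\fv\le\vol$ stated in the text,
\[
  \vol(Q_1\cup Q_2)+\vol(Q_1\cap Q_2)=\vol(Q_1)+\vol(Q_2)=\fv(Q_1)+\fv(Q_2)
\]
\[
  \le \fv(Q_1\cup Q_2)+\fv(Q_1\cap Q_2)\le \vol(Q_1\cup Q_2)+\vol(Q_1\cap Q_2),
\]
so every inequality is an equality, and comparing term by term (each $\fv$ term is at most the corresponding $\vol$ term) gives $\fv(Q_1\cup Q_2)=\vol(Q_1\cup Q_2)$. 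Iterating, every finite union of tight sets is tight. (Alternatively one can argue directly that $\fv(Q)=\vol(Q)$ is equivalent to every job being scheduled to the maximal feasible extent outside $Q$, a property closed under union by a short case distinction on whether each job's volume already fits in $T\setminus Q_i$.)

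Then I would carry out the induction. By the stair-property of $S_{\pltr}$ (Definition~\ref{def:stair_property}), the processors that are not completely idle are exactly $1,\dots,k^\ast$ for some $k^\ast$, so it suffices to produce critical sets for these. Fix $k\le k^\ast$ and assume critical sets exist for every processor $k'$ with $k<k'\le k^\ast$. Let $U$ be the union of \emph{all} critical sets of \emph{all} these higher processors (with $U=\emptyset$ when $k=k^\ast$); by the union lemma $U$ is tight, and condition~4 applied to each constituent critical set gives $\vol(t)\ge k$ for every $t\in U$. Since processor $k$ is not completely idle it has at least one engagement; for each engagement $t$ of processor $k$ let $Q_t$ be the tight set from Lemma~\ref{lemma:critical}, and put
\[
  C^0_k \;:=\; U\;\cup\;\bigcup_{t\ \text{engagement of}\ k} Q_t .
\]
Then $C^0_k\neq\emptyset$ and satisfies conditions~1--4: it contains $U$, hence every critical set of a higher processor; it contains each engagement $t$ of processor $k$ because $t\in Q_t$; it is tight as a finite union of tight sets; and $\vol(t)\ge k-1$ for every $t\in C^0_k$, using $\vol(t)\ge k$ on $U$ and the guarantee of Lemma~\ref{lemma:critical} on each $Q_t$. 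Hence the collection of subsets of $T$ meeting conditions~1--4 is nonempty, and being a subfamily of the finite power set of $T$ it contains a member $C_k$ of maximal density $\phi(C_k)$; this $C_k$ is a critical set for processor $k$, which closes the induction.

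The single substantial step is the union lemma — equivalently, that tightness ($\uv=0$) is preserved under unions; I expect the supermodularity verification (or the case analysis in the alternative argument) to be the only place where real work is needed. The one subtlety not to gloss over is that condition~1 quantifies over \emph{every} critical set of \emph{every} higher processor, not merely the ones explicitly constructed; this is exactly why the statement has to be proved by downward induction, and why it matters that the union of all those sets is again tight — which is precisely the union lemma once more.
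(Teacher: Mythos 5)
Your proposal is correct and follows essentially the same route as the paper: a downward induction over the processors in which the candidate set for processor $k$ is the union of the higher processors' critical sets with the tight sets of Lemma~\ref{lemma:critical} for the engagements of processor $k$, and the density condition is then obtained by maximizing $\phi$ over the finite family of sets satisfying the first four conditions. Your explicit union lemma (tightness $\fv = \vol$ is preserved under unions, via modularity of $\vol$ and supermodularity of the truncated $\fv$, or equivalently the per-job "scheduled maximally outside $Q$" characterization) is precisely the step the paper's proof asserts without argument, and your care about condition~1 quantifying over \emph{all} critical sets of higher processors and about completely idle processors only makes the same argument more watertight.
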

\begin{proof}
  We show the existence by induction over the processors $m, \ldots, 1$.
  For processor $m$, consider the union of all tight sets over engagements of processor $m$.
  This set fulfills all conditions necessary except for the maximality in regard to $\phi$.
  Suppose that the critical sets $C_m, \ldots, C_{k+1}$ exist.
  Take $Q_k \subseteq T$ as the union of $C_{k+1}$ and all tight sets over engagements of processor $k$.
  By definition of $C_{k+1}$, we have $Q_k \supseteq C_{k'}$ for all $k' > k$.
  By construction of $Q_k$, every engagement $t$ of processor $k$ is contained in $Q_k$.
  Finally, we have $\fv(Q_k) = \vol(Q_k)$ and $\vol(t) \geq k-1$ for every $t \in Q_k$ since all sets in the union fulfill these properties.
\end{proof}

\subsection{Definitions Based on Critical Sets}

\begin{definition}
  For the critical set $C_k$ of some processor $k \in [m]$, we define $\crit(C_k) \coloneqq k$.
  Let $\succeq$ be the total order on the set of critical sets $C$ across all processors which corresponds to $\crit$, i.e.\ $C \succeq C'$ if and only if $\crit(C) \geq \crit(C')$.
  Equality in regard to $\succeq$ is denoted with $\sim$.
  We extend the definition of $\crit$ to general time slots $t \in T$ with $\crit(t) \coloneqq
      \max \{\crit(C) \mid C~\text{is critical set},  t \in C \}$ if $t \in C$ for some critical set $C$ and otherwise $\crit(t) \coloneqq 0$.
  We further extend $\crit$ to intervals $D \subseteq T$ with $\crit(D) \coloneqq \max \{ \crit(t) \mid t \in D \}$
\end{definition}

\begin{definition}
A nonempty interval $V \subseteq T$ is a \emph{valley} if $V$ is inclusion maximal on condition that $C \sim V$ for some fixed critical set $C$.
Let $D_1, \ldots, D_l$ be the maximal intervals contained in a critical set $C$.
A nonempty interval $V$ is a \emph{valley of $C$} if $V$ is exactly the valley between $D_{a}$ and $D_{a+1}$ for some $a < l$, i.e. $V = [\max D_a + 1, \min D_{a+1} - 1]$.
By choice of $C$ as critical set (property 1), a valley of $C$ is indeed a valley.
We define the jobs $J(V) \subseteq J$ for a valley $V$ as all jobs which are scheduled by $S_{\pltr}$ in every $t \in V$.
\end{definition}
\begin{definition}
  For a critical set $C$, an interval $D \subseteq T$ \emph{is a section of} $C$ if $D \cap C$ contains only full subintervals of $C$ and at least one subinterval of $C$.
  For a critical set $C$ and a section $D$ of $C$, the \emph{left valley} $V_l$ is the valley of $C$ ending at $\min (C \cap D) - 1$, if such a valley of $C$ exists.
  Symmetrically, the \emph{right valley} $V_r$ is the valley of $C$ starting at $\max (C \cap D) + 1$, if such a valley of $C$ exists.
\end{definition}

\begin{lemma}\label{lemma:valley}
  For every critical set $C$, every section $D \subseteq T$ of $C$, we have: if $\phi(C \cap D) \leq \crit(C) - \delta$ for some $\delta \in \mathbb{N}$, then the left valley $V_l$ or the right valley $V_r$ of $C$ and $D$ is defined and $|J(V_l)| + |J(V_r)| \geq \delta$.
  We take $|J(V)| \coloneqq 0$ if $V$ is not defined.
\end{lemma}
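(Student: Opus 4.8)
The plan is to reduce the statement to a counting argument about the jobs that $S_{\pltr}$ schedules inside $C\cap D$ \emph{beyond what is forced}, and to show each such job must lie in $J(V_l)\cup J(V_r)$. Write $k:=\crit(C)$, let $D_1<\dots<D_l$ be the maximal intervals of $C$, so that $C\cap D=D_a\cup\dots\cup D_b$ for some $a\le b$ (this is exactly what it means for $D$ to be a section of $C$), and put $C_L:=D_1\cup\dots\cup D_{a-1}$ and $C_R:=D_{b+1}\cup\dots\cup D_l$, so $C=C_L\sqcup(C\cap D)\sqcup C_R$; the valley $V_l$ is defined precisely when $C_L\neq\emptyset$ and then separates $C_L$ from $C\cap D$, and symmetrically for $V_r,C_R$. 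The first ingredient is the structural consequence of criticality: since $\fv(C)=\vol(C)$ while $\fv(j,C)\le\vol(j,C)$ for every feasible schedule, in fact $\fv(j,C)=\vol(j,C)$ for every job $j$; hence every job $j$ with $\fv(j,C)>0$ satisfies $\vol(j,T\setminus C)=p_j-\vol(j,C)=|E_j\setminus C|$, i.e.\ $S_{\pltr}$ schedules $j$ in \emph{every} slot of $E_j\setminus C$, and so $j\in J(V)$ for every valley $V$ of $C$ with $V\subseteq E_j$.

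Next I would call $j$ a \emph{surplus job} if $\uv(j,C\cap D)>0$. Unwinding the definition of $\fv$ (treating the cases $\fv(j,C\cap D)>0$ and $\fv(j,C\cap D)=0$ separately, each a one-line computation) shows that a surplus job has a slot of $E_j$ on which it is \emph{not} scheduled inside $E_j\cap\bigl(C\setminus(C\cap D)\bigr)=E_j\cap(C_L\cup C_R)$ — that slot cannot lie in $E_j\setminus C$ by the structural fact above. If this unscheduled slot lies in $C_L$ then $C_L\neq\emptyset$, so $V_l$ is defined; the slot is $\le\max C_L<\min(C\cap D)$, whereas $j$ is also scheduled at some slot of $C\cap D$, which is $\ge\min(C\cap D)$, so the interval $E_j$ contains $V_l$, and the structural fact gives $j\in J(V_l)$; symmetrically an unscheduled slot in $C_R$ forces $j\in J(V_r)$. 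Hence \emph{every} surplus job belongs to $J(V_l)\cup J(V_r)$.

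It then remains to produce $\delta$ surplus jobs. Every slot of $C\cap D\subseteq C$ carries at least $k-1$ busy processors, so $\vol(C\cap D)\ge(k-1)\,|C\cap D|$, while the hypothesis $\phi(C\cap D)\le k-\delta$ gives $\fv(C\cap D)\le(k-\delta)\,|C\cap D|$; therefore $\sum_j\uv(j,C\cap D)=\vol(C\cap D)-\fv(C\cap D)\ge(\delta-1)\,|C\cap D|$, and since each surplus job contributes at most $|C\cap D|$ to this sum there are at least $\delta-1$ surplus jobs, whence $|J(V_l)|+|J(V_r)|\ge\delta-1$. Squeezing out the final unit — getting $\delta$ rather than $\delta-1$ — is where I expect the real work to be. It should come from observing that $C\cap D$, being a nonempty union of maximal intervals of the critical set $C=C_k$ (each of which either meets $C_{k+1}$ or contains an engagement of processor $k$), necessarily contains a slot where processor $k$ is busy, so that $\vol(C\cap D)>(k-1)\,|C\cap D|$ strictly; failing a clean such argument, the extra job would instead be a suitable non-surplus member of $J(V_l)\cup J(V_r)$. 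Either way, this step needs the fine structure of how $C$ is assembled from $C_{k+1}$ and the tight sets of Lemma~\ref{lemma:critical}, whereas the surplus-to-valley step is comparatively robust.
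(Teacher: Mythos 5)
Your setup, your ``structural fact'' ($\fv(C)=\vol(C)$ forces $\fv(j,C)=\vol(j,C)$ termwise, so any job with positive forced volume in $C$ fills every slot of $E_j\setminus C$), and your surplus-to-valley step are all correct and coincide with the paper's argument: every job $j$ with $\uv(j,C\cap D)>0$ ends up in $J(V_l)\cup J(V_r)$, and the count $\uv(C\cap D)=\vol(C\cap D)-\fv(C\cap D)\geq(\delta-1)\,|C\cap D|$ with $\vol(C\cap D)\geq(k-1)|C\cap D|$ gives at least $\delta-1$ such jobs. When the inequality $\vol(C\cap D)\geq(k-1)|C\cap D|$ is strict this already yields $\delta$ surplus jobs, which is exactly the paper's first case.

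The gap is the step you yourself flag: passing from $\delta-1$ to $\delta$ when $\vol(C\cap D)=(k-1)|C\cap D|$. Your proposed fix --- that strictness always holds because each maximal interval of $C_k$ meets $C_{k+1}$ or contains an engagement of processor $k$ --- is not available: the critical set is \emph{not} the union of $C_{k+1}$ with the tight sets (that union is only the starting point of the existence proof); $C_k$ is any set satisfying the listed properties with \emph{maximal density}, and a section $D$ may well have $\vol(t)=k-1$ on all of $C\cap D$ and contain no engagement of processor $k$. The paper treats this equality case head-on, and the missing ingredient is precisely the density-maximality of $C$, which you never invoke. Concretely: in the equality case an engagement $t$ of processor $k$ satisfies $\vol(t)\geq k>k-1$, so $t\notin C\cap D$; if moreover there were only $\delta-1$ surplus jobs, then (since each contributes at most $|C\cap D|$ to $\uv(C\cap D)\geq(\delta-1)|C\cap D|$) every surplus job would be scheduled in \emph{every} slot of $C\cap D$, and if additionally no non-surplus job scheduled in $C\setminus D$ had $E_j\cap(C\cap D)\neq\emptyset$, one gets $\fv(C\setminus D)=\vol(C\setminus D)$ and $\phi(C\setminus D)>\phi(C)$, while $\vol(t)\leq k-1$ on $C\cap D$ shows $C\setminus D$ still satisfies all defining properties of a critical set for processor $k$ --- contradicting the maximality of $\phi(C)$. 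This produces the extra (non-surplus) job whose execution interval straddles $C\setminus D$ and $C\cap D$, and the structural fact then places it in $J(V_l)$ or $J(V_r)$. Your fallback sentence points in this direction, but without the maximal-density contradiction the argument does not close, so as written the proof only establishes $|J(V_l)|+|J(V_r)|\geq\delta-1$.
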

\begin{figure}
  \centering
  \includegraphics[width=1\textwidth]{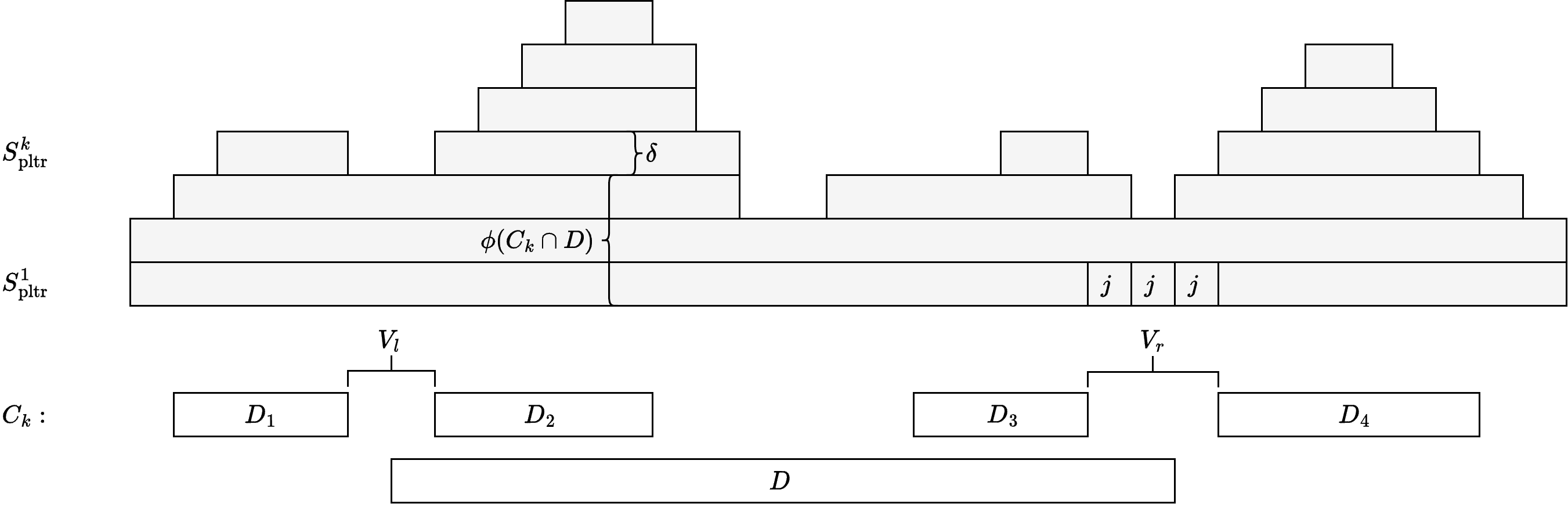}
  \caption{The left and right valley $V_l$ and $V_r$ of the critical set $C_k$ for processor $k$ and a section $D$ of $C_k$.
  Lemma~\ref{lemma:valley} guarantees that $\delta$ jobs are scheduled at every slot of $V_l$ or $V_r$.}\label{fig:left_right_valleys}
\end{figure}
\begin{proof}
  Refer to Figure~\ref{fig:left_right_valleys} for a visual sketch of the proposition.
  By choice of $C$ as critical set with $c \coloneqq \crit(C)$, we have $\vol(C \cap D) \geq (c-1) \cdot |C \cap D|$.
  If this inequality is fulfilled strictly, then with the premise
  $\fv(C \cap D) / |C \cap D| \leq c - \delta$ we directly get $\uv(C \cap D) / |C \cap D| > \delta - 1$.
  This implies that there are at least $\delta$ jobs $j$ scheduled in $C \cap D$ with $\uv(j, C \cap D) > 0$.
  Such jobs can be scheduled in the part of $C$ not contained in $D$, i.e.\ we must have $E_j \cap (C \setminus D) \neq \emptyset$ and hence the left valley $V_l$ or the right valley $V_r$ of $C$ and $D$ must be defined.
  Since these jobs $j$ are scheduled in $C$ only for the minimum amount possible, i.e.\ $\vol(j, C) = \fv(j, C) > 0$, they must be scheduled in every $t \in E_j \setminus C$ and are therefore contained in $J(V_l)$ or $J(V_r)$.

  If on the other hand we have equality, i.e.\ $\vol(C \cap D) = (c-1) \cdot |C \cap D|$, then let $t$ be an engagement of processor $c$.
  Since $\vol(t) > c-1$, we must have $t \notin C \cap D$.
  By the same argument as before, we have that if $\fv(C \cap D) / |C \cap D| \leq c - \delta$, then $\uv(C \cap D) / |C \cap D| \geq \delta - 1$.
  Let $J' \coloneqq \{ j \in J \mid \uv(j, C \cap D) > 0 \}$.
  Since $\uv(j, C \cap D) \leq | C \cap D |$ for every $j \in J$, we have $|J'| \geq \delta - 1$.
  If this lower bound is fulfilled with equality, then every $j \in J'$ must be scheduled in every time slot of $C \cap D$ and hence $\fv(J', C \setminus D) = \vol(J', C \setminus D)$.
  Now suppose for contradiction that all jobs $j$ scheduled during $C \setminus D$ which are not contained in $J'$ have $E_j \cap C \cap D = \emptyset$.
  Then $\fv(C \setminus D) = \vol(C \setminus D)$ and
  we get $\phi(C \setminus D) > \phi(C)$ since by case assumption $\vol(C \cap D) / |C \cap D| = (c-1) < \phi(C)$.
  With $\vol(t) \leq c - 1$ for every $t \in C \cap D$, we know that $\crit(C \cap D) \leq k$ and therefore $C \setminus D$ is still a critical set for processor $c$ but has higher density than $C$, contradicting the choice of $C$.
  Therefore, there must exist a job $j \notin J'$ scheduled in $C \setminus D$ with an execution interval intersecting $C \cap D$.
  In any case, we have at least $\delta$ jobs scheduled in $C$ with an execution interval intersecting both $C \setminus D$ and $C \cap D$.
  This implies that the left valley $V_l$ or the right valley $V_r$ of $C$ and $D$ exists and that at least $\delta$ jobs are contained in $J(V_l)$ or $J(V_r)$.
\end{proof}

\section{Modification of the PLTR-Schedule for Analysis}\label{section:approximation}
In this section we modify the schedule $S_{\pltr}$ returned by $\PLTR$ in two steps.
We stress that this is for the analysis only and not part of $\PLTR$.
The first step augments specific processors with auxiliary busy slots such that in every critical set $C$ at least the first $\crit(C)$ processors are busy all the time.
For the single processor $\LTR$ algorithm, the crucial property for the approximation guarantee is that every idle interval of $S_{\opt}$ can intersect at most $2$ distinct idle intervals of the schedule returned by $\LTR$.
The second modification step of $S_{\pltr}$ is more involved and establishes this crucial property on every processor $k \in [m]$ by making use of Lemma~\ref{lemma:valley}.
More specifically, it will establish the stronger property that $\hat \phi(B) > k - 1$ for every $\busy$ interval $B$ on processor $k$ with $\crit(B) \geq 2$, i.e.\ that every feasible schedule requires $k$ busy processors at some point during $B$.
Idle intervals surrounded by only busy intervals $B$ with $\crit(B) \leq 1$ are then handled in Lemma~\ref{lemma:constellation} with essentially the same argument as for the single processor $\LTR$ algorithm.
By making sure that the modifications cannot decrease the costs of our schedule, we obtain an upper bound for the costs of $S_{\pltr}$.

\subsection{Augmentation and Realignment}
We transform $S_{\pltr}$ into the \emph{augmented schedule $S_{\aug}$} by adding for every $t$ with $k \coloneqq \crit(t) \geq 2$ and $\vol(t) = k-1$ an auxiliary busy slot on processor $k$.
No job is scheduled in this auxiliary busy slot on processor $k$ and it does also not count towards the volume of this slot.
It merely forces processor $k$ to be in the on-state at time $k$ while allowing us to keep thinking in terms of $\idle$ and $\busy$ intervals in our analysis of the costs.

\begin{lemma}\label{lemma:augmented}
  In $S_{\aug}$ processors $1, \ldots, \crit(t)$ are busy in every slot $t \in T$ with $\crit(t) \geq 2$.
\end{lemma}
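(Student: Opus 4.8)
The plan is to fix an arbitrary time slot $t \in T$ with $k \coloneqq \crit(t) \geq 2$ and show that each of the processors $1, \dots, k$ is busy at $t$ in $S_{\aug}$, splitting the argument into the two cases $\vol(t) \geq k$ and $\vol(t) < k$. Throughout I would work from two facts: that $S_{\pltr}$ (built from the maximum flow via the assignment of Lemma~\ref{lemma:flow_feasibility}) fulfils the stair-property of Definition~\ref{def:stair_property}, so that at every slot the busy processors of $S_{\pltr}$ are exactly $1, \dots, \vol(t)$; and that the augmentation step only ever adds busy slots and does not change $\vol$, so any processor busy at $t$ in $S_{\pltr}$ is still busy at $t$ in $S_{\aug}$.

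First I would extract the key lower bound on $\vol(t)$. Since $\crit(t) = k \geq 1$, by definition of $\crit(t)$ as a maximum there exists a critical set $C$ with $t \in C$ and $\crit(C) = k$. The fourth defining property of a critical set states $\vol(t') \geq \crit(C) - 1$ for every $t' \in C$, and applying this at $t' = t$ gives $\vol(t) \geq k - 1$.

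Then comes the case analysis. If $\vol(t) \geq k$, the stair-property of $S_{\pltr}$ already makes processors $1, \dots, k$ busy at $t$, and this persists in $S_{\aug}$. If instead $\vol(t) < k$, then together with $\vol(t) \geq k-1$ we get $\vol(t) = k - 1$ exactly; hence $t$ satisfies the trigger condition $\crit(t) = k \geq 2$ and $\vol(t) = k-1$ of the augmentation step, so $S_{\aug}$ places an auxiliary busy slot on processor $k$ at $t$, while the stair-property of $S_{\pltr}$ still keeps processors $1, \dots, k-1$ busy at $t$. In both cases processors $1, \dots, k = \crit(t)$ are busy at $t$ in $S_{\aug}$, which is the claim.

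I do not expect a serious obstacle here: the whole content of the proof is the observation that the augmentation step is applicable \emph{precisely} in the remaining case $\vol(t) = k-1$, which is forced by the critical-set lower bound $\vol(t') \geq \crit(C) - 1$. The only points that need a little care are invoking the stair-property to cover the bottom $k-1$ processors, noting that adding auxiliary busy slots never destroys an already-busy slot, and observing that the augmentation is well defined at $t$ since $\crit(t)$ is a single value, so at most one auxiliary slot is added at each $t$.
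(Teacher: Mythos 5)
Your proof is correct and is essentially the paper's argument spelled out in detail: the paper's one-line proof appeals to exactly the three ingredients you use, namely the critical-set property $\vol(t') \geq \crit(C) - 1$, the definition of $\crit(t)$, and the augmentation rule, together with the stair-property that $S_{\pltr}$ is constructed to satisfy. Your case split $\vol(t) \geq k$ versus $\vol(t) = k-1$ is precisely the intended reasoning, so there is nothing to add.
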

\begin{proof}
  The property directly follows from our choice of the critical sets, the definition of $\crit(t)$, and the construction of $S_{\aug}$.
\end{proof}

As a next step, we transform $S_{\aug}$ into the \emph{realigned schedule $S_{\real}$} using Algorithm~\ref{alg:real}. We briefly sketch the ideas behind this realignment.
Lemma~\ref{lemma:augmented} guarantees us that every busy interval $B$ on processor $k$ is a section of the critical set $C$ with $C \sim B$.
It also guarantees that the left and right valley $V_l, V_r$ of $C$ and $B$ do not end within an idle interval on processor $k$.
Lemma~\ref{lemma:valley} in turn implies that if the density of $B$ is too small to guarantee that $S_{\opt}$ has to use processor $k$ during $B$, i.e.\ if $\hat \phi(B) \leq k - 1$, then $V_l$ or $V_r$ is defined and there is some $j$ scheduled in every slot of $V_l$ or $V_r$.
Let $V$ be the corresponding left or right valley of $C$ and $D$ for which such a job $j$ exists.
Instead of scheduling $j$ on the processors below $k$, we can schedule $j$ on processor $k$ in idle time slots during $V$.
This merges the busy interval $B$ with at least one neighbouring busy interval on processor $k$.
In the definition of the realignment, we will call this process of filling the idle slots during $V$ on processor $k$ \emph{closing of valley $V$ on processor $k$}.
The corresponding subroutine is called $\close(k, V)$.

The crucial part is ensuring that this merging of busy intervals by closing a valley continues to be possible throughout the realignment whenever we encounter a busy interval with a density too small.
For this purpose, we go through the busy intervals on each processor in decreasing order of their criticality, i.e.\ in the order of $\succeq$.
We also allow every busy slot to be used twice for the realignment by introducing further auxiliary busy slots, since for a section $D$ of the critical set $C$, both the right and the left valley might be closed on processor $k$ in the worst case.
This allows us to maintain the invariants stated in Lemma~\ref{lemma:invariant} during the realignment process, which correspond to the initial properties of Lemma~\ref{lemma:valley} and~\ref{lemma:augmented} for $S_{\aug}$.

\begin{algorithm}
  \caption{Realignment of $S_{\aug}$ for analysis only}\label{alg:real}
\begin{algorithmic}
  \State{$\res(V) \gets 2 |J(V)|$ for every valley $V$}
  \For{$k \gets m$ \textrm{to} $1$}
    \State{\Call{$\fillop$}{$k, T$}}
    \State{$\res(V) \gets \res(V) - 1$ for every valley $V$
    s.t.\ some $V'$ with $V' \cap V \neq \emptyset$ was closed on processor $k$}
  \EndFor{}
\algrule
  \Function{$\fillop$}{$k, V$}
    \If{$\crit(V) \leq 1$}
      \State{\Return{}}
    \EndIf{}
    \State{let $C$ be the critical set s.t.\ $C \sim V$}
    \While{exists busy interval $B \subseteq V$ on processor $k$
    with $B \sim V$ and $\hat \phi(B) \leq k - 1$}
      \State{let $V_l, V_r$ be the left and right valley for $C$ and interval $B$ (if $B$ is a section of $C$)}
      \If{$V_l$ exists and $\res(V_l) > 0$}
        \State{\Call{$\close$}{$k, V_l$}}
      \ElsIf{$V_r$ exists and $\res(V_r) > 0$}
        \State{\Call{$\close$}{$k, V_r$}}
      \EndIf{}
    \EndWhile{}
    \For{every valley $V' \subseteq V$ of $C$ which has not been closed on $k$}
      \State{\Call{$\fillop$}{$k, V'$}}
    \EndFor{}
  \EndFunction{}

  \Function{$\close$}{$k, V$}
    \For{every $t \in V$ which is idle on processor $k$}
      \If{processors $1, \ldots, k-1$ are idle at $t$}
        \State{introduce new auxiliary busy slot on processor $k$ at time $t$}
      \Else{}
        \State{move busy slot at time $t$ of highest processor
        among $1, \ldots, k-1$ to processor $k$ at $t$}
      \EndIf{}
    \EndFor{}
  \EndFunction{}
\end{algorithmic}
\end{algorithm}

\subsection{Invariants for Realignment}
\begin{restatable}{lemma}{lemmainvariant}\label{lemma:invariant}
  For an arbitrary step during the realignment of $S_{\aug}$ and a valley $V \subseteq T$, let the \emph{critical processor $k_V$ for $V$} be the highest processor such that
  \begin{itemize}
    \item
      processor $k_V$ is not fully filled yet, i.e.\ $\fillop(k_V, T)$ has not yet returned,
    \item
      no $V' \supseteq V$ has been closed on $k_V$ so far, and
    \item
      there is a (full) busy interval $B \subseteq V$ on processor $k_V$.
  \end{itemize}
  We take $k_V \coloneqq 0$ if no such processor exists.
  At every step in the realignment of $S_{\aug}$ the following invariants hold for every valley $V$, where $C$ denotes the critical set with $C \sim V$.
  \begin{enumerate}
    \item
      If $\phi(C \cap D) \leq k_V - \delta$ for some $\delta \in \mathbb{N}$, some section $D \subseteq V$ of $C$, then the left valley $V_l$ or the right valley $V_r$ of $C, D$ exists and $\res(V_l) + \res(V_r) \geq 2 \delta$.
    \item
      For every $t \in C \cap V$, processors $1, \ldots, k_V$ are busy at $t$.
    \item
      Every busy interval $B \subseteq V$ on processor $k_V$ with $B \sim V$ is a section of $C$.
  \end{enumerate}
\end{restatable}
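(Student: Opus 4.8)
The plan is to prove all three invariants at once by induction on the elementary steps carried out by Algorithm~\ref{alg:real}: the initialization (which sets up $S_{\aug}$ together with $\res(V) = 2|J(V)|$ for every valley $V$), each execution of $\close(k,V')$ inside a call to $\fillop(k,\cdot)$, and each update of the $\res$-values performed after $\fillop(k,T)$ returns for a processor $k$. In the base case no processor has been filled and no valley has been closed, so for every valley $V$ the critical processor $k_V$ is just the highest processor carrying a full busy interval inside $V$, which by Lemma~\ref{lemma:augmented} and the definition of a valley equals $\crit(V) = \crit(C)$ for $C \sim V$. Then invariant~2 is exactly Lemma~\ref{lemma:augmented}; invariant~3 holds because Lemma~\ref{lemma:augmented} keeps processor $\crit(C)$ busy on all of $C \cap V$, so any busy interval $B \subseteq V$ on processor $\crit(C)$ with $B \sim V$ covers precisely full subintervals of $C$; and invariant~1 is Lemma~\ref{lemma:valley}, whose conclusion $|J(V_l)| + |J(V_r)| \geq \delta$ becomes $\res(V_l) + \res(V_r) \geq 2\delta$ under the chosen initialization.

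For the inductive step I would treat the two kinds of modification separately. A call $\close(k,V')$ is always reached from the while-loop of some $\fillop(k,V)$, so $V'$ is a left or right valley of the critical set $C \sim V$ with respect to a busy section $B \subseteq V$; in particular every slot touched by $\close$ lies in the valley $V'$ of $C$, hence \emph{outside} $C$. This is the geometric fact the argument hinges on: pushing down the busy slot of the highest occupied processor among $1,\dots,k-1$ at such a slot $t$ can endanger invariant~2 for a valley $\widetilde V$ only when $t \in C_{\crit(\widetilde V)} \cap \widetilde V$, which forces $\crit(\widetilde V) < \crit(C)$ and $\widetilde V \subseteq V'$; using that invariants~2 and~3 held for $\widetilde V$ before the step, one shows that either $\close(k,V')$ empties processor $k-1$ on all of $C_{\crit(\widetilde V)} \cap \widetilde V$ (so $k_{\widetilde V}$ drops and invariant~2 re-holds for the new, smaller $k_{\widetilde V}$), or processor $k-1$ was not among the first $k_{\widetilde V}$ processors anyway. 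Invariant~3 survives because closing $V'$ on processor $k$ only merges busy intervals of processor $k$ across valleys of $C$ — and merging sections of $C$ along a valley of $C$ yields again a section of $C$ — while on lower processors it can only split or shorten busy intervals, leaving every set $C' \cap \widetilde V$ untouched. Invariant~1 is not affected by a $\close$ step, which changes neither any $\res$-value, nor any density $\phi(C'\cap D)$, nor any $k_{\widetilde V}$ except through the drop just described, which only strengthens the hypothesis.

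The other kind of step is the $\res$-update after $\fillop(k,T)$ returns. Here processor $k$ becomes fully filled, so $k_V$ drops to at most $k-1$ for every valley $V$ that had $k_V = k$, while $\res(V)$ decreases by at most one, and by exactly one precisely for the valleys $V$ that overlap some valley closed on $k$. These two effects are synchronized: if after the update $\phi(C\cap D) \leq k_V - \delta$ holds for a valley $V$ with section $D$ and left/right valleys $V_l, V_r$, then before the update $k_V$ was larger by at least one, so the pre-step hypothesis $\phi(C\cap D) \leq k_V - (\delta+1)$ held and the inductive invariant~1 gave $\res(V_l) + \res(V_r) \geq 2(\delta+1)$; since each of the two resources fell by at most one, $\res(V_l) + \res(V_r) \geq 2\delta$ still holds. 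If $k_V$ drops by more than one, or a resource is not decremented, the inequality only gets slacker; this slack of two per step is exactly what the factor $2$ in the initialization — equivalently, allowing each busy slot to be used twice — pays for. Invariants~2 and~3 for the new value of $k_V$ are just the invariants that held for processor $k_V$ restricted to the relevant valley, which the previous paragraph shows are preserved by every closing carried out on processors above $k_V$.

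The main obstacle, I expect, is the $\close$ step and, within it, invariant~2: showing that pushing a busy slot down off processor $k-1$ never leaves some $t \in C_{\crit(\widetilde V)} \cap \widetilde V$ covered by fewer than $k_{\widetilde V}$ busy processors for \emph{any} valley $\widetilde V$. This needs the structure of $S_{\pltr}$ from Section~\ref{section:structure} (Lemma~\ref{lemma:critical} together with the nestedness of the critical sets) to pin down which valleys $\widetilde V$ can still have $k_{\widetilde V} \geq k-1$ at that moment of the sweep, combined with a careful analysis of exactly which of the three defining conditions of $k_{\widetilde V}$ fails, so that the drop of $k_{\widetilde V}$ happens in lockstep with the loss of a busy processor. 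A secondary point — needed only so that ``left valley'' and ``right valley'' are well-defined at each iteration of the while-loop — is to check, again from invariant~3 at the start of the iteration, that the loop only ever closes valleys of $C \sim V$ adjacent to a busy section $B$ of $C$.
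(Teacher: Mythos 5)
Your overall strategy is the same as the paper's: induction over the elementary steps of Algorithm~\ref{alg:real}, base case from Lemma~\ref{lemma:augmented}, Lemma~\ref{lemma:valley} and the initialization $\res(V)=2|J(V)|$, and the synchronization ``$\res$ drops by at most one per valley exactly when the critical processor drops by at least one,'' which is what the factor $2$ pays for. The problem is that the central technical claim is precisely the one you defer as ``the main obstacle,'' so the core of the proof is missing. Concretely: when a valley $V'\supseteq V$ is closed on processor $k$, one must show that the critical processor strictly decreases, i.e.\ $k'_V\le k_V-1$. Your proposed dichotomy (``either $\close(k,V')$ empties processor $k-1$ on all of $C\cap V$, or processor $k-1$ was not among the first $k_V$ processors'') does not reflect what $\close$ does --- it removes, at each slot idle on $k$, the busy slot of the \emph{highest currently busy} processor below $k$, which varies from slot to slot --- and it misses the real danger: removing top slots can \emph{split} a partial busy interval on some processor $k'\in[k_V,k)$ that crossed the boundary of $V$, and the surviving fragment could be a new \emph{full} busy interval inside $V$ on $k'$, preventing $k_V$ from dropping and breaking Invariant~2 (whose busy slots at $C\cap V$ have just lost their top layer). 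The paper's Case~1 rules this out with a dedicated argument: the stair property among processors $1,\dots,k$ inside $V$, the induction hypothesis applied to the enclosing valley containing $\min V-1$ to show that a full busy interval on $k_V$ cannot touch $\min V$, and a contradiction via a busy interval on processor $k'_V+1>k_V$ before the close. Nothing of this kind appears in your proposal, and Invariant~2 under a $\close$ step is where the lemma lives.

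Two secondary points. First, your bookkeeping for the $\res$-update invokes Invariant~1 ``before the update'' with $k_V$ larger by one; but immediately before the decrement $k_V$ may already have dropped (at the earlier close), so Invariant~1 at that instant yields only $2\delta$, which the decrement would destroy. As in the paper's Case~4, you must compare against the state right before the \emph{first} valley intersecting $V$ was closed during $\fillop(k,T)$, using that $\res$ is unchanged from then until the decrement, and treat separately the subcase of a closed valley $V'\subset V$, where $k_V=k$ throughout and the drop only occurs because processor $k$ becomes fully filled. Second, in the base case only $k_V\le\crit(C)$ holds (the busy interval on processor $\crit(C)$ may extend beyond $V$), which is all you need; and your dynamic preservation of Invariant~3 (``lower processors only split or shorten busy intervals'') is exactly the scenario in which a full busy interval that is not a section could appear. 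The paper instead derives Invariant~3 statically from Invariant~2, handling a busy interval starting at $\min V$ via the enclosing valley $W$ with $k_W\ge k_V$; that route avoids the issue entirely and I would adopt it.
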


\begin{restatable}{lemma}{lemmasreal}\label{lemma:s_real}
  The resulting schedule $S_{\real}$ of the realignment of $S_{\aug}$ is defined.
\end{restatable}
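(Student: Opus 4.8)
The plan is to show that the realignment in Algorithm~\ref{alg:real} never gets stuck, i.e.\ that every call to $\close(k, V)$ succeeds in the sense that it only fills idle slots (it never tries to "move" a busy slot from a processor $\le k-1$ that has no busy slot and is not covered by the auxiliary-slot branch), and that the $\while$-loop inside $\fillop(k, V)$ always terminates with no remaining busy interval $B \subseteq V$ on processor $k$ with $B \sim V$ and $\hat\phi(B) \le k-1$. Concretely, $S_{\real}$ is defined exactly when, whenever the $\while$-loop in $\fillop(k,V)$ finds such a bad busy interval $B$, at least one of $V_l, V_r$ exists with positive $\res$; otherwise the loop spins forever on the same $B$. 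So the core of the proof is: \emph{the bad busy interval $B$ found in the loop is a section of $C$ (so $V_l$, $V_r$ are well-defined notions), and $\res(V_l) + \res(V_r) > 0$.}

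The key is to invoke Lemma~\ref{lemma:invariant}. First I would argue that when the $\while$-loop in $\fillop(k,V)$ examines a bad busy interval $B \subseteq V$ with $B \sim V$ and $\hat\phi(B) \le k-1$, the relevant critical processor is exactly $k_V = k$: by the order of processing (decreasing $k$) and by the structure of the recursion, at this step $\fillop(k,T)$ has not returned, no $V' \supseteq V$ has been closed on $k$ (a closure of such a $V'$ on $k$ would have merged away or subsumed the interval $B$), and $B \subseteq V$ witnesses the third bullet of the definition of $k_V$; processors above $k$ are already fully filled so they cannot be the critical processor. Invariant~3 then tells us $B$ is a section of $C$, so $V_l$ and $V_r$ are defined notions. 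Since $\hat\phi(B) \le k-1$, pick the subinterval $Q \subseteq B$ achieving the peak; writing $B$ as a union of full subintervals of $C$ and setting $D$ to the section with $C \cap D \supseteq$ the $C$-part of $Q$, we get $\phi(C \cap D) \le k - 1 = k_V - 1$, so Invariant~1 with $\delta = 1$ gives that $V_l$ or $V_r$ exists with $\res(V_l) + \res(V_r) \ge 2$. Hence the $\ifop$/$\elseop$ branch fires and $\close$ is called on a valley with positive $\res$; this strictly increases the total busy volume on processor $k$ inside $V$ (it fills at least one idle slot, since $B$ being bad and a proper section means there is idle time adjacent in $V$), so the $\while$-loop makes progress and terminates.

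It remains to check the $\close$ subroutine itself is well-defined: for each idle-on-$k$ slot $t \in V$, either processors $1,\dots,k-1$ are all idle at $t$ (auxiliary-slot branch) or some processor among them is busy at $t$ (move branch) — these two cases are exhaustive, so $\close$ always has a defined action. I would also note that moving a busy slot down to processor $k$ preserves feasibility of the underlying job assignment (the moved slot carries its job, or is auxiliary), and that the auxiliary slots introduced do not count towards volume, so $\vol$ is unchanged — this is needed so that the invariants, which are phrased in terms of $\vol$, $\fv$, and $\res$, continue to hold after each $\close$, which is precisely the content of Lemma~\ref{lemma:invariant} being maintained across steps. Finally, since each processor is processed once, each $\fillop(k,V)$ recursion descends into strictly smaller valleys $V' \subsetneq V$ of $C$, and each $\while$-loop terminates, the whole procedure halts, so $S_{\real}$ is well-defined.

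The main obstacle I expect is pinning down that the critical processor is $k_V = k$ at the moment the $\while$-loop fires and that Invariant~1 can be applied with the right section $D$ — i.e.\ correctly translating "$\hat\phi(B) \le k-1$" into "$\phi(C \cap D) \le k_V - \delta$" for a genuine section $D$ of $C$ contained in $V$. This requires knowing that $B$, being a section of $C$ (Invariant~3) with $B \sim V$, decomposes cleanly along $C$'s maximal subintervals and that the peak-density witness can be enlarged to such a section without lowering its density below $k-1$, which uses that the valleys of $C$ between consecutive subintervals carry strictly higher volume than $C$ itself (they lie outside $C$). Everything else is bookkeeping that the invariants of Lemma~\ref{lemma:invariant} are designed to carry.
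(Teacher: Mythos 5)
Your proposal follows essentially the same route as the paper's proof: use Invariant~3 (with the observation that $k_V = k$ at that moment) to get that the offending busy interval $B$ is a section of $C$, so $V_l, V_r$ are defined, then apply Invariant~1 with $\delta = 1$ to obtain $\res(V_l) + \res(V_r) \geq 2$, so a valley with positive supply is closed and the while-loop makes progress and terminates. Two small remarks: the step you single out as the main obstacle is actually immediate --- take $D = B$ itself, since $C \cap B \subseteq B$ gives $\phi(C \cap B) \leq \hat\phi(B) \leq k - 1$ straight from the definition of peak density, with no need to enlarge any witness --- and the paper backs the progress claim (that closing really removes idle time on processor $k$) by Invariant~2, namely that the closed valley cannot end strictly within an idle interval on $k$, which is the clean way to justify your parenthetical that an idle slot adjacent to $B$ lies inside the closed valley.
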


\begin{lemma}\label{lemma:phi_prop}
  For every processor $k \in [m]$ and every busy interval $B$ on processor $k$ in $S_{\real}$ with $\crit(B) \geq 2$, we have $\hat \phi(B) > k - 1$.
\end{lemma}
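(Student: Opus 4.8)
The plan is to prove the statement by tracking how the realignment process in Algorithm~\ref{alg:real} treats a busy interval $B$ on processor $k$ with $\crit(B) \geq 2$, and to argue that the loop in $\fillop(k, V)$ guarantees $\hat\phi(B) > k-1$ for every such $B$ that survives. First I would fix a processor $k$ and a busy interval $B$ in the final schedule $S_{\real}$ with $c \coloneqq \crit(B) \geq 2$, and let $C$ be the critical set with $C \sim B$. The key observation is that when $\fillop$ was called on processor $k$ with the valley $V$ satisfying $B \subseteq V$ and $B \sim V$, the while-loop explicitly searches for busy intervals $B' \subseteq V$ on processor $k$ with $B' \sim V$ and $\hat\phi(B') \leq k-1$, and closes a left or right valley whenever it finds one. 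So if $B$ still has $\hat\phi(B) \leq k-1$ at the end, I must argue that the while-loop either (i) did close a valley adjacent to $B$ — in which case $B$ would have been merged with a neighbouring busy interval and hence would not appear as a separate busy interval in $S_{\real}$, a contradiction — or (ii) was unable to close such a valley because both $\res(V_l)$ and $\res(V_r)$ were zero, which I must rule out.

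The heart of the argument is ruling out case (ii) using the invariants of Lemma~\ref{lemma:invariant}. At the moment $\fillop(k, V)$ processes $B$, processor $k$ is not yet fully filled, $B \subseteq V$ is a busy interval with $B \sim V$, and (after augmentation and by Lemma~\ref{lemma:augmented}/the invariants) no $V' \supseteq V$ has been closed on $k$, so the critical processor $k_V$ is at least $k$; combined with invariant~3, $B$ is a section of $C$. Now if $\hat\phi(B) \leq k-1 = c-1$, then in particular $\phi(C \cap B) \leq c - 1 \leq k_V - 1$, so invariant~1 with $D = B$, $\delta = 1$ (or with the appropriate $\delta$ extracted from how far below $c-1$ the peak density sits) tells us that $V_l$ or $V_r$ of $C, B$ exists and $\res(V_l) + \res(V_r) \geq 2$. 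Hence at least one of $\res(V_l), \res(V_r)$ is strictly positive, so the while-loop's \textbf{if}/\textbf{elseif} must fire and close one of them — contradicting that the loop terminated with $B$ still present and $\hat\phi(B) \leq k-1$. I would also need to check that closing a valley adjacent to $B$ genuinely removes $B$ as a standalone busy interval: $\close(k, V_l)$ fills every idle slot on processor $k$ throughout $V_l$, and by invariant~2 processors $1,\dots,k_V \geq k$ are busy on $C \cap V$, so after closing there is no idle slot on processor $k$ separating $B$ from the busy block on the other side of $V_l$, which is exactly what "merges $B$ with a neighbouring busy interval" means.

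The main obstacle I anticipate is bookkeeping about \emph{when} in the run of Algorithm~\ref{alg:real} to apply the invariants and making sure the hypotheses of Lemma~\ref{lemma:invariant} (in particular the definition of $k_V$) are met precisely at the step where $B$ is examined — one has to be careful that $\res$ values are read at the right moment (the invariants are stated "at every step", but the decrements of $\res$ happen after each processor's $\fillop$ finishes, so one must confirm the resources are still available during the while-loop, not merely before processor $k$ started). A secondary subtlety is the recursion structure of $\fillop$: a busy interval $B$ on processor $k$ with $B \sim V$ might only be reached after descending into a sub-valley $V' \subseteq V$, so I should phrase the argument over the recursion — any busy interval on processor $k$ with $\crit = c \geq 2$ is handled inside some $\fillop(k, V')$ call with $B \sim V'$, and the while-loop there is the relevant one. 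Once these are pinned down, the density inequality $\hat\phi(B) > k-1$ for every surviving busy interval with $\crit(B) \geq 2$ follows exactly as sketched, since "$\hat\phi(B) \leq k-1$" is precisely the loop guard that cannot hold at termination.
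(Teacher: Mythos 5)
Your overall plan follows the paper's: everything hinges on the guard of the while-loop in $\fillop(k,V)$. The genuine gap is at the point where you close the argument, in the phrase ``contradicting that the loop terminated with $B$ still present and $\hat\phi(B)\leq k-1$''. The interval $B$ you fixed is a busy interval of the \emph{final} schedule $S_{\real}$; at the moment the while-loop of $\fillop(k,V)$ terminates, $B$ need not exist yet as a busy interval on processor $k$ --- it may only be formed afterwards, by later closings in that same loop or by closings inside adjacent sub-valleys during the subsequent recursive calls $\fillop(k,V')$. The negated guard at termination therefore says nothing directly about $B$, and your dichotomy (i)/(ii) inspects the wrong objects: the loop never examines the final $B$, only the busy intervals present at each iteration (your case (i) in particular is not a contradiction in itself, since the merged interval may simply \emph{be} $B$). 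What is missing is exactly the bridge the paper's proof supplies: during $\fillop(k,\cdot)$ busy intervals on processor $k$ are only extended, and closings fill only valley slots, which are disjoint from the critical set $C$ with $C\sim V$; since $B\sim V$ forces $B\cap C\neq\emptyset$, some slot of $B\cap C$ is already busy on $k$ when the loop terminates, so $B$ contains a termination-time busy interval $B'$ with $B'\subseteq B\subseteq V$ and $B'\sim V$. Because the guard fails for $B'$, we get $\hat\phi(B')>k-1$, and monotonicity of the peak density under supersets ($B'\subseteq B$ implies $\hat\phi(B)\geq\hat\phi(B')$) finishes the proof. Without this step your contradiction does not go through.

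Two smaller points. First, most of your second paragraph (Invariant~1 yielding $\res(V_l)+\res(V_r)\geq 2$ so that a close can always fire, Invariant~2 ensuring a close really merges busy intervals) re-derives Lemma~\ref{lemma:s_real}; for the present lemma it suffices to cite that the realignment is well defined and hence the while-loop terminates, and then read off the negated guard --- the resource bookkeeping you worry about belongs to that lemma, not this one. Second, your claim $k_V\geq k$ already presupposes a full busy interval contained in $V$ on processor $k$ at that moment, which is again the timing issue above; and Invariant~3 is stated for busy intervals on processor $k_V$, so you need $k_V=k$ (which does hold, since all higher processors are already filled) before applying it to an interval on processor $k$.
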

\begin{proof}
  We show that $\fillop(k, T)$ establishes the property on processor $k$.
  The claim then follows since $\fillop(k, T)$ does not change the schedules of processors above $k$.
  We know that on processor $k$ busy intervals are only extended, since in $\fillop(k, T)$ we only close valleys for busy intervals $B$ on $k$ which are a section of the corresponding critical set $C$.
  Let $B \subseteq V$ be a busy interval on processor $k$ in $S_{\real}$ with $B \sim V$ and $\crit(B) \geq 2$.
  No valley $W \supseteq V$ can have been closed on $k$ since otherwise there would be no $B \subseteq V$ in $S_{\real}$.
  Therefore, at some point $\fillop(k, V)$ must be called.
  Consider the point in $\fillop(k, V)$ when the while-loop terminates.
  Clearly at this point all busy intervals $B' \subseteq V$ with $B' \sim V$ on processor $k$ have $\hat \phi(B') > k - 1$.
  At this point there must also be at least one such $B'$ for $B$ to be a busy interval on $k$ in $S_{\real}$ with $B \sim V$ and $B \subseteq V$.
  In particular, one such $B'$ must have $B' \subseteq B$, which directly implies $\hat \phi(B) \geq \hat \phi(B') > k - 1$.
\end{proof}
While with Lemma~\ref{lemma:phi_prop} we have our desired property for busy intervals $B$ of $\crit(B) \geq 2$, we still have to handle busy intervals of $\crit(B) \leq 1$.
To be precise, we have to handle idle intervals which are surrounded only by busy intervals $B$ of $\crit(B) \leq 1$.
We will show that this constellation can only occur in $S_{\real}$ on processor $1$ and that the realignment has not done any modifications in these intervals, i.e.\ $S_{\pltr}$ and $S_{\real}$ do not differ for these intervals.
With the same argument as for the original single-processor Left-to-Right algorithm, we then get that at least one processor has to be busy in any schedule during these intervals.

\begin{lemma}\label{lemma:engagements}
  The realignment of $S_{\aug}$ does not create new engagement times but may only change the corresponding processor being engaged, i.e.\ if $t \in T$ is an engagement of some processor $k$ in $S_{\real}$, then $t$ is also an engagement of some processor $k'$ in $S_{\aug}$.
\end{lemma}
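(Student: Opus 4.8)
The plan is to prove the statement by induction over the sequence of $\close$-operations the realignment performs, showing that starting from $S_{\aug}$ every call to $\close$ leaves the set of engagement time slots unchanged or smaller (while the processor engaged at a given slot may change). First I would record precisely what a call $\close(k,W)$ does. Here $W$ is one of the valleys $V_l,V_r$ of the critical set $C$ handled by the enclosing $\fillop(k,V)$, so $W$ is a gap between two maximal subintervals of $C$, and both neighbouring slots $\min W-1$ and $\max W+1$ lie in $C$ with $\crit(\min W-1),\crit(\max W+1)\ge\crit(C)\ge k$. The call turns every slot of $W$ that is idle on processor $k$ into a busy slot on $k$, the work placed on $k$ being taken from the highest busy processor below $k$ at that slot (or being a fresh auxiliary slot when processors $1,\dots,k-1$ are all idle there). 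Consequently nothing outside $W$ changes, processors above $k$ are untouched, processor $k$ becomes busy throughout $W$ (so its busy intervals inside $W$ merge, possibly with adjacent ones), and on processors below $k$ only busy slots inside $W$ are removed.

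From this description the only places a \emph{new} engagement can be created are: (I) the slot $\min W$ on processor $k$, if $k$ was idle at both $\min W$ and $\min W-1$ before the call; and (II) the slot $\max W+1$ on some processor $k'<k$ whose busy slot at $\max W$ was moved up to $k$. A case check shows every other candidate slot $s$ for a new engagement on a processor $p$ — i.e.\ $s$ busy on $p$ after and $s-1$ idle on $p$ after, but not so before — forces $s$ to have been an engagement of processor $k$, or of a processor strictly between $k'$ and $k$, already before the call, which by the induction hypothesis is an engagement in $S_{\aug}$; in particular, interior slots of $W$ are harmless because a busy-interval merge never exposes a fresh left endpoint in the interior.

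For case (I): since $\min W-1\in C$ with $\crit(\min W-1)\ge k$, either a super-valley of the valley $V^\ast$ containing $\min W-1$ has already been closed on $k$ (and then $k$ is busy throughout it, in particular at $\min W-1$), or invariant~2 of Lemma~\ref{lemma:invariant} applied to $V^\ast$, whose critical processor at this step is $k$, together with Lemma~\ref{lemma:augmented} shows processors $1,\dots,k$ are busy at $\min W-1$ in the current schedule. Either way the busy interval of processor $k$ through $\min W$ already covers $\min W-1$, so $\min W$ is not its engagement. For case (II): the same argument shows processors $1,\dots,k$ are busy at $\min D_i:=\max W+1$ (the left endpoint of a maximal subinterval $D_i$ of $C$) throughout the realignment, so $k'$ stays busy at $\min D_i$ and a new engagement of $k'$ there genuinely could arise; the point to settle is that $\min D_i$ is \emph{already} an engagement in $S_{\aug}$. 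Since $\min D_i-1\in W$ has $\crit(\min D_i-1)<\crit(C)$ while processor $\crit(C)$ is forced busy at $\min D_i$ by the definition of $C$ (which contains an engagement of processor $\crit(C)$, and on which the lowest $\crit(C)$ processors are busy by Lemma~\ref{lemma:augmented}), the busy count in $S_{\aug}$ strictly increases at $\min D_i$ unless processor $\crit(C)$ was busy at $\min D_i-1$ already — i.e.\ unless $\vol(\min D_i-1)\ge\crit(C)$; but $\min D_i-1\notin C$ carries unnecessary volume, and a short argument as in the proof of Lemma~\ref{lemma:valley} (replacing $C$ by a denser critical set for processor $\crit(C)$) then contradicts the maximality of $\phi(C)$. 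Hence $\min D_i$ is an engagement of $S_{\aug}$, and since destroying engagements (as merges do) only shrinks the set, the induction closes and the engagement times of $S_{\real}$ are contained in those of $S_{\aug}$.

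The main obstacle is case (II) and, underlying it, the bookkeeping of which processor is busy at which slot: unlike $S_{\aug}$, the schedule $S_{\real}$ need not satisfy the stair-property, since a $\close$ that moves a busy slot from processor $k'$ up to $k$ with $k'<k-1$ leaves a gap, so one cannot argue purely with the number of busy processors at a slot and must track individual slot movements. I expect the cleanest write-up to first establish, as an auxiliary invariant carried through the whole realignment, that processors $1,\dots,\crit(t)$ stay busy at every $t$ with $\crit(t)\ge2$ — the analogue of Lemma~\ref{lemma:augmented} for the intermediate schedules — and then to derive both (I) and (II) from it together with the invariants of Lemma~\ref{lemma:invariant}.
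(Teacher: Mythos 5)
Your overall skeleton (step-by-step induction over $\close$ operations, locating the possible new engagement slots around the closed valley, dispatching interior slots and the left end via Invariant~2) matches the paper's proof, but your case (II) — the slot $t=\max W+1$ on processor $k'<k$ whose busy slot at $\max W$ was lifted to $k$ — contains a genuine gap. You try to show directly that $t$ is an engagement of $S_{\aug}$, which forces you to rule out $\vol(\max W)\ge\crit(C)$ in $S_{\aug}$, and your justification ("$\max W\notin C$ carries unnecessary volume, so a denser critical set would contradict the maximality of $\phi(C)$") does not hold up: a slot adjacent to $C$ but outside it can perfectly well have $\vol\ge\crit(C)$ (processor $\crit(C)$ may run busy straight through a valley of its critical set, since the critical set only has to contain the \emph{engagements} of that processor, not all of its busy slots), the jobs scheduled there need not carry unnecessary volume with respect to any relevant set, and adding such a slot to $C$ neither preserves the tightness condition $\fv=\vol$ nor yields a set of larger density, so no contradiction with the choice of $C$ arises. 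The auxiliary invariant you propose to base the clean write-up on — that processors $1,\ldots,\crit(t)$ stay busy at every $t$ with $\crit(t)\ge 2$ throughout the realignment — is likewise unproven and doubtful: $\close(k,\cdot)$ moves the \emph{highest} busy slot below $k$, which can be exactly processor $\crit(t)$ when processors $\crit(t)+1,\ldots,k$ are idle at $t$, breaking the property; the paper's Invariant~2 is deliberately weaker (it only covers slots of $C\cap V$ relative to the current critical processor $k_V$).

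The fix is the observation you stop just short of making. In case (II), the very conditions under which the slot at $\max W$ on $k'$ is moved — $\max W$ idle on processor $k$ (only idle slots of the valley are filled) and $k'$ the highest busy processor below $k$ there, so $\max W$ is idle on $k'+1,\ldots,k$ — combine with the fact you already note (processors $1,\ldots,k$ busy at $\max W+1$ before the close, by Invariant~2 applied to the enclosing valley, or to the larger valley containing $\max W+1$ if $\max W + 1$ lies outside it) to show that $\max W+1$ was \emph{already} an engagement, e.g.\ of processor $k$, in the schedule just before the close. Your step-induction hypothesis (current engagement times are contained in those of $S_{\aug}$) then finishes the case immediately, with no claim about $\vol(\max W)$ in $S_{\aug}$ needed. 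This is exactly how the paper argues; without this replacement your case (II), and hence the whole induction, does not close.
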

\begin{proof}
  Consider the first step in the realignment of $S_{\aug}$ in which some $t \in T$ becomes an engagement of some processor $k'$ where $t$ was no engagement of any processor before this step.
  This step must be the closing of some valley $V$ on some processor $k > k'$:
  On processor $k$, we have seen that closing some valley can only merge busy intervals.
  On processors above $k$, the schedule does not change.
  Busy slots on processors $k'' < k$ are only removed (by definition of $\close$), therefore $t-1$ must have been busy on processor $k'$ and idle on $k' + 1, \ldots, k$ before the close.
  \begin{figure}[H]
    \centering
		\includegraphics[scale=0.4]{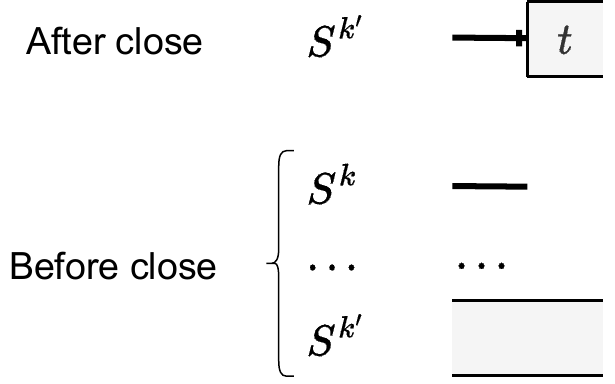}
  \end{figure}
  If $t \in V$, then processor $k' + 1$ (or $k$) must have been busy before at $t$.
  Hence $t$ was already an engagement before the close, contradicting our initial choice of $t$.
  If $t \notin V$, then $t \succ V$.
  Let $W$ be the valley such that $V$ is closed during $\fillop(k, W)$, hence $W \supset V$.
  If $t \in W$, then $t \sim C_W$ and $t \in C_W$.
  By Invariant 2, processors $1, \ldots, k_W = k$ are busy at $t$ before the close.
  Again, this implies that $t$ was an engagement before the close already, contradicting our choice of $t$.
  If $t \notin W$, then let $W'$ be the valley with $t \sim W$ and $t \in W'$.
  We have $W \prec t \sim W'$ and $W' \supset W$ and $t \in C_{W'}$.
  Therefore $k_{W'} \geq k_W = k$ and Invariant 2 implies that processors $1, \ldots, k$ are busy at $t$ before the close.
  Hence, $t$ was an engagement before the close already, again contradicting our initial choice of $t$.
\end{proof}
\begin{lemma}\label{lemma:constellation}
  Let $I$ be an idle interval in $S_{\real}$ on some processor $k$ and let $B_l, B_r$ be the busy intervals on $k$ directly to the left and right of $I$ with $\crit(B_l) \leq 1$ and $\crit(B_r) \leq 1$.
  Allow $B_l$ to be empty, i.e.\ we might have $\min I = 0$, but $B_r$ must be nonempty, i.e.\ $\max I < d$.
  Then we must have $k = 1$ and $\hat \phi(B_l \cup I \cup B_r) > 0$.
\end{lemma}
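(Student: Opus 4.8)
The plan is to argue in two stages: first that $k=1$, and second that $\hat\phi(B_l\cup I\cup B_r)>0$ using the original single-processor Left-to-Right reasoning applied to processor $1$.

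For the claim $k=1$, I would use Lemma~\ref{lemma:phi_prop} together with the structure of valleys. Suppose $k\geq 2$. The interval $I$ is idle on processor $k$, so both $B_l$ (if nonempty) and $B_r$ are genuine busy intervals on $k$. By Lemma~\ref{lemma:phi_prop}, any busy interval $B$ on processor $k$ with $\crit(B)\geq 2$ satisfies $\hat\phi(B)>k-1\geq 1$, so $\crit(B)\leq 1$ forces such intervals to have escaped the realignment's merging. I would then show that on a processor $k\geq 2$, a busy interval $B_r$ with $\crit(B_r)\leq 1$ cannot be directly preceded by an idle interval: by Lemma~\ref{lemma:augmented}, in $S_{\aug}$ every slot $t$ with $\crit(t)\geq 2$ has processors $1,\dots,\crit(t)$ busy, and by Lemma~\ref{lemma:engagements} realignment creates no new engagements — so the engagement at $\min B_r$ on processor $k$ must already have been an engagement of some processor $k'$ in $S_{\aug}$. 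For processor $k\geq 2$ to become engaged at a slot $t$ in $S_{\aug}$, one needs $\crit(t)\geq k\geq 2$ (this is exactly what the augmentation and the critical-set construction guarantee: processor $k$ is only busy where $\crit(t)\geq k$, except possibly on processor $1$). But then $\crit(B_r)\geq\crit(\min B_r)\geq 2$, contradicting $\crit(B_r)\leq 1$. Hence $k=1$.

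For the second claim, with $k=1$ established, I would show that $S_{\real}$ and $S_{\pltr}$ agree on the interval $B_l\cup I\cup B_r$ on processor $1$. Since $\crit(B_l),\crit(B_r)\leq 1$, no valley intersecting this region has criticality $\geq 2$, so $\fillop(1,\cdot)$ never calls $\close$ on any valley touching $I$ (the $\crit(V)\leq 1$ guard returns immediately, and the while-loop only fires for busy intervals with $\crit\geq 2$). Also no higher processor's closing can have touched processor $1$ here, since $\close(k,V)$ only moves busy slots \emph{down} to processor $k$ from processors $1,\dots,k-1$ at slots that are idle on $k$ — but if such a move happened at a slot in $I\cup B_l\cup B_r$ it would have required that slot to be busy on processor $1$ and such a slot would lie in $C_V$ with $\crit\geq 2$, again contradicting $\crit(B_r)\leq 1$. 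So on this region processor $1$ is exactly as in $S_{\pltr}$. Now I invoke the single-processor Left-to-Right argument: $I$ is a full idle interval produced by $\keepidle$ on processor $1$, so $\keepidle$ could not extend it past $\max I$, meaning that if processor $1$ were kept idle for one more slot the remaining jobs could no longer be feasibly completed on a single processor over the relevant horizon; equivalently the forced volume over $B_l\cup I\cup B_r$ (or over the suffix starting at $\min I$) strictly exceeds its length, giving $\fv(Q)>|Q|$ for some $Q\subseteq B_l\cup I\cup B_r$, i.e. $\hat\phi(B_l\cup I\cup B_r)>0$.

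The main obstacle I anticipate is making the first step — ruling out $k\geq 2$ — fully rigorous, since it requires carefully combining Lemmas~\ref{lemma:augmented}, \ref{lemma:engagements}, and the invariants of Lemma~\ref{lemma:invariant} to pin down exactly which processor can be engaged at the left endpoint of $B_r$, and in particular to exclude the case where $B_r$ is a short busy interval of low criticality sitting directly after an idle interval on a high processor. The density/feasibility part of the second step is essentially verbatim the original $\LTR$ analysis and should be routine once the agreement of $S_{\real}$ with $S_{\pltr}$ on this region is secured.
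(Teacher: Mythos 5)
There is a genuine gap in your first step, and it is exactly the point where the paper has to work hardest. From Lemma~\ref{lemma:engagements} you may only conclude that $\min B_r$ is an engagement of \emph{some} processor $k'$ in $S_{\aug}$ -- the lemma explicitly allows the processor to change -- and since engagements of processors $\geq 2$ in $S_{\aug}$ lie in critical sets, $\crit(\min B_r) \leq 1$ only forces $k' = 1$. That does not yet give $k=1$ in $S_{\real}$: the subroutine $\close$ moves busy slots \emph{upward}, from processors $1,\ldots,k-1$ to the processor $k$ currently being filled, so a busy interval on a high processor in $S_{\real}$ could in principle sit over slots where in $S_{\aug}$ only processor $1$ was busy. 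To exclude this you must first show that no closed valley ever touches $[\max I, \max B_r]$, and your argument for that is incorrect: a slot affected by $\close(k,V)$ lies in the valley $V$, not in the critical set $C_V$, and may well have criticality $\leq 1$, so "such a slot would lie in $C_V$ with $\crit \geq 2$" does not hold; moreover you tacitly assume $\crit(I)\leq 1$, which is not a hypothesis of the lemma and must itself be derived. The paper's mechanism is different and essential: from $k'=1$ and the stair property, $\vol(\max I)=0$ in $S_{\pltr}$, hence $J(V)=\emptyset$ and therefore $\res(V)=0$ for every valley containing $\max I$, so such valleys are never closed; and every closed valley meeting $[\max I,\max B_r]$ must contain $\max I$ because its endpoints are flanked by critical-set slots of criticality $\geq 2$ while all slots of the region have criticality $\leq 1$. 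Only after this does $k=1$, $\crit(I)\leq 1$, and the agreement of $S_{\real}$ with $S_{\pltr}$ on the region follow.

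Your second step also glosses over the actual difficulty. Infeasibility of extending $\keepidle$ past $\max I$ yields, via Lemmas~\ref{lemma:cut} and~\ref{lemma:feasibility}, a violated set $Q$ somewhere in $T$, possibly caused by upper and lower bounds already installed for \emph{other} processors; it does not by itself localize forced volume inside $B_l \cup I \cup B_r$ (and note that $\hat\phi(\cdot)>0$ needs only $\fv(Q')>0$ for some subset, not $\fv(Q)>|Q|$). The paper instead takes the unique job $j$ scheduled at $\min B_r$ (unique because $\vol(\min B_r)\geq 2$ would make $\min B_r$ an engagement of processor $2$, contradicting $\crit(B_r)\leq 1$) and shows $E_j \subseteq I \cup B_r$ by playing the greedy maximality of both $\keepbusy$ (otherwise $B_l$ could have been extended by scheduling $j$ at $\min I$) and $\keepidle$ (otherwise $I$ could have been extended by scheduling $j$ at $\max B_r + 1$); this single job gives $\hat\phi(B_l\cup I\cup B_r)>0$. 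So while your outline matches the paper's, both the $k=1$ claim and the density claim need the missing arguments above to go through.
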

\begin{proof}
  By Lemma~\ref{lemma:engagements} and $\crit(B_r) \leq 1$, we know that $\min B_r$ is an engagement of processor $1$ in $S_{\aug}$.
  Hence $\max I$ is idle in $S_{\aug}$ on processor $1$ and hence idle on all processors (by stair-property in $S_{\aug}$).
  Since no jobs are scheduled at $\max I$, we know that $\crit(\max I) \leq 1$ and $J(V) = \emptyset$ for all valleys $V$ containing the slot $\max I$, and hence also $\res(V) = 0$ at all times during the realignment.
  Therefore, no $V$ intersecting $[\max I, \max B_r]$ was closed during the realignment on any processor since this $V$ would contain $\max I$.
  Since $B_r$ is a busy interval with $\crit(B_r) \leq 1$ (i.e.\ not containing engagements of processors above $1$ in $S_{\aug}$), we must then have $k = 1$.
  For $I$ to be idle on processor $k = 1$ in $S_{\real}$ and $\crit(I) \geq 2$, some $V \succeq I$ with $V \cap I \neq \emptyset$ and hence $V \supseteq \max I$ would have to been closed, which contradicts what we have just shown.
  Therefore $\crit(I) \leq 1$ and no valley $V$ with $V \cap [\min B_l - 1, \max B_r + 1] \neq \emptyset$ can have been closed during the realignment.
  Therefore, the constellation occurs exactly in the same way in $S_{\aug}$ and $S_{\pltr}$ on processor $1$.
  In other words, for processor $1$ in both $S_{\aug}$ and $S_{\pltr}$, $B_l$ and $B_r$ are busy intervals and $I$ is an idle interval.

  Let $j$ be the single job scheduled at time slot $\min B_r$.
  We conclude by showing that $E_j \subseteq I \cup B_r$ and therefore $\hat \phi(I \cup B_r) > 0$.
  Otherwise, $j$ could be scheduled at $\min I$ or $\max B_r + 1$.
  In the first case, $\PLTR$ would have extended $B_l$ by scheduling $j$ at time $\min I$ instead of at $\min B_r$.
  In the second case, $\PLTR$ would have extended the idle interval $I$ by scheduling $j$ at $\max B_r + 1$ instead of at $\min B_r$.
\end{proof}

\begin{lemma}\label{lemma:intersection}
  For every processor $k$, every idle interval on processor $k$ in $S_{\opt}$ intersects at most two distinct idle intervals of processor $k$ in $S_{\real}$.
\end{lemma}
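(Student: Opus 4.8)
The plan is to argue by contradiction and reduce everything to the two structural results already in hand: Lemma~\ref{lemma:phi_prop} (every busy interval $B$ on processor $k$ in $S_{\real}$ with $\crit(B) \geq 2$ satisfies $\hat\phi(B) > k-1$) and Lemma~\ref{lemma:constellation} (an idle interval flanked only by busy intervals of criticality $\leq 1$ must sit on processor $1$ and forces a strictly positive peak density across the flanking block). The stair-property of $S_{\opt}$ from Lemma~\ref{lemma:stair_property_opt} supplies the bridge between ``processor $k$ idle in $S_{\opt}$'' and ``at most $k-1$ busy processors in $S_{\opt}$''.

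First I would suppose that some idle interval $I$ of processor $k$ in $S_{\opt}$ intersects three distinct idle intervals of processor $k$ in $S_{\real}$. Since $I$ is an interval meeting several idle intervals of $S_{\real}$, it contains every slot between the first and last of them, so the idle intervals of processor $k$ in $S_{\real}$ that $I$ meets form a contiguous block; choosing three consecutive ones $I_1 < I_2 < I_3$, there is exactly one (inclusion-maximal, hence full) busy interval $B$ of processor $k$ in $S_{\real}$ strictly between $I_1$ and $I_2$, and exactly one such $B'$ strictly between $I_2$ and $I_3$, and both are nonempty. Because $I$ contains all slots between $I_1$ and $I_3$, we get $B, I_2, B' \subseteq I$; in particular $\max I_2 < d$ since the nonempty $B'$ lies to its right.

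Next I would use that processor $k$ is idle throughout $I$ in $S_{\opt}$: by the stair-property, processors $k, k+1, \dots, m$ are all idle throughout $I$, so $\vol_{S_{\opt}}(t) \leq k-1$ for every $t \in I$. Hence for every $Q \subseteq B$, $\fv(Q) \leq \vol_{S_{\opt}}(Q) \leq (k-1)\,|Q|$, giving $\hat\phi(B) \leq k-1$, and likewise $\hat\phi(B') \leq k-1$. By the contrapositive of Lemma~\ref{lemma:phi_prop} this forces $\crit(B) \leq 1$ and $\crit(B') \leq 1$. Now I would apply Lemma~\ref{lemma:constellation} to the idle interval $I_2$ with left busy interval $B$ and right busy interval $B'$: the hypotheses $\crit(B) \leq 1$, $\crit(B') \leq 1$, and $\max I_2 < d$ are met, so the lemma yields $k = 1$ together with $\hat\phi(B \cup I_2 \cup B') > 0$. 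But with $k=1$, processor $1$ being idle throughout $I$ forces (stair-property again) $\vol_{S_{\opt}} \equiv 0$ on $I \supseteq B \cup I_2 \cup B'$, hence $\fv(Q) = 0$ for every $Q \subseteq B \cup I_2 \cup B'$ and therefore $\hat\phi(B \cup I_2 \cup B') = 0$, a contradiction.

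I expect the only genuinely delicate point to be the bookkeeping in the second step: verifying that one may take $I_1, I_2, I_3$ consecutive among the idle intervals of processor $k$ in $S_{\real}$ so that $B$ and $B'$ are truly inclusion-maximal busy intervals lying inside $I$, and that $B'$'s nonemptiness is exactly what licenses the $\max I_2 < d$ hypothesis of Lemma~\ref{lemma:constellation}. Once this is pinned down, the substance of the argument is carried entirely by Lemmas~\ref{lemma:phi_prop} and~\ref{lemma:constellation}, exactly mirroring the single-processor $\LTR$ reasoning that an optimal idle interval meeting three $\LTR$ idle intervals would have to straddle two full busy intervals whose densities contradict optimality.
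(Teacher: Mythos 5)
Your proposal is correct and follows essentially the same route as the paper: take the middle of the three intersected idle intervals, note its flanking busy intervals lie inside the optimal idle interval, and combine Lemma~\ref{lemma:phi_prop}, Lemma~\ref{lemma:constellation}, and the stair-property of $S_{\opt}$ to force $k$ busy processors somewhere inside that interval, a contradiction. Your version merely spells out the case analysis (deriving $\crit(B),\crit(B')\leq 1$ from $\hat\phi\leq k-1$ before invoking Lemma~\ref{lemma:constellation}) that the paper's terser proof leaves implicit.
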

\begin{proof}
  Let $I_{\opt}$ be an idle interval in $S_{\opt}$ on processor $k$ intersecting three distinct idle intervals of processor $k$ in $S_{\real}$.
  Let $I$ be the middle one of these three idle intervals.
  Lemma~\ref{lemma:constellation} and Lemma~\ref{lemma:phi_prop} imply that $k$ busy processors are required during $I$ and its neighboring busy intervals.
  This makes it impossible for $S_{\opt}$ to be idle on processor $k$ during the whole interval $I_{\opt}$.
\end{proof}

\subsection{Approximation Guarantee}
Lemma~\ref{lemma:intersection} finally allows us to bound the costs of the schedule $S_{\real}$ with the same arguments as in the proof for the single-processor LTR algorithm of~\cite{irani_left_to_right_soda_2003}.
We complement this with an argument that the augmentation and realignment could have only increased the costs of $S_{\pltr}$ and that we have hence also bounded the costs of the schedule returned by our algorithm $\PLTR$.

\begin{theorem}\label{theorem:approximation}
  Algorithm $\PLTR$ constructs a schedule of costs at most $2 \OPT + P$.
\end{theorem}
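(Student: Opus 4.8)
The plan is to bound the costs of the realigned schedule $S_{\real}$ and then argue that the augmentation and realignment only increased costs, so $\costs(S_{\pltr}) \leq \costs(S_{\real})$, which then yields the claim. Recall that $\costs(S) = \busy(S) + \idle(S)$ and $\busy(S_{\real}) \geq \busy(S_{\pltr}) = P$ trivially since realignment only adds auxiliary busy slots. Wait—this is the wrong direction; I must be careful. The auxiliary busy slots added during augmentation and realignment do not carry real jobs, so they contribute to $\busy$ but the genuine processing volume is still $P$. The point is rather that these added busy slots were previously \emph{idle} slots, and converting an idle slot to a busy slot can only decrease (or keep equal) the cost of that processor's schedule: turning idle time into busy time never raises $\idle(S^k) = \sum_{I} \min\{|I|,q\}$ and at worst merges on-intervals. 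So first I would carefully verify $\costs(S_{\pltr}) \leq \costs(S_{\aug}) \leq \costs(S_{\real})$ by a slot-by-slot / interval-merging argument on each processor.

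Next I would bound $\costs(S_{\real})$ against $\OPT$ using Lemma~\ref{lemma:intersection}, mimicking the single-processor $\LTR$ analysis of~\cite{irani_left_to_right_soda_2003}. Fix a processor $k$. On $S_{\real}^k$, consider the idle intervals $\mathcal{I}$; each contributes $\min\{|I|,q\}$ to $\idle(S_{\real}^k)$. Charge each such $I$ to the idle intervals of $S_{\opt}^k$ it intersects, or—if it intersects none—to a busy interval of $S_{\opt}^k$ it is contained in. In the latter case $|I|$ is charged against $\busy(S_{\opt}^k)$ on that span, and summing over $k$ these charges total at most $P$ (optimal busy volume equals $P$). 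For the former case, Lemma~\ref{lemma:intersection} says each idle interval of $S_{\opt}^k$ is hit by at most two idle intervals of $S_{\real}^k$; a standard case analysis (an $S_{\real}^k$ idle interval either lies inside an $S_{\opt}^k$ idle interval, or straddles its boundary) shows the total charge to idle intervals of $S_{\opt}$ is at most $2\,\idle(S_{\opt}) \leq 2\,\OPT$, using $\min\{|I|,q\}$ and the fact that an on-transition in $S_{\opt}$ corresponds to the right endpoint of one of its idle intervals. Combining, $\idle(S_{\real}) \leq 2\,\OPT + P$, hmm, this overcounts; more carefully the idle cost of $S_{\real}$ splits into the part charged to $\OPT$'s idle cost (bounded by $2\,\idle(S_{\opt})$) and the part charged to $\OPT$'s busy volume (bounded by $P$ total), giving $\idle(S_{\real}) \leq 2\,\idle(S_{\opt}) + P \leq 2\,\OPT + P$. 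Wait—then $\costs(S_{\real}) = \busy(S_{\real}) + \idle(S_{\real})$, and $\busy(S_{\real})$ includes auxiliary slots. I must instead charge the auxiliary busy slots too: an auxiliary busy slot at $t$ on processor $k$ forces $k \leq \crit(t)$ busy processors by Lemma~\ref{lemma:augmented}/Lemma~\ref{lemma:phi_prop}, so $S_{\opt}$ also has processor $k$ on (indeed busy) at $t$, hence those slots are charged to $\busy(S_{\opt})$ slot-for-slot—but $\busy(S_{\opt})$ also only totals $P$, so this double-uses the budget. The clean resolution: do not bound $\costs(S_{\real})$ directly; instead bound $\costs(S_{\pltr}) = P + \idle(S_{\pltr})$ and show $\idle(S_{\pltr}) \leq \idle(S_{\real}^{\text{genuine}})$ where we only count genuine idle slots, i.e.\ use $\costs(S_{\pltr}) \leq \costs(S_{\real})$ from the first paragraph together with $\busy(S_{\real}) \leq P + (\text{slots also busy in }S_{\opt})$ absorbed via the $2\OPT$ term.

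The main obstacle, then, is bookkeeping: reconciling the auxiliary busy slots (which inflate $\busy(S_{\real})$ beyond $P$) with the cost comparison. I expect the cleanest route is: (i) prove $\costs(S_{\pltr}) \leq \costs(S_{\real})$; (ii) write $\costs(S_{\real}) = \sum_k \busy(S_{\real}^k) + \sum_k \idle(S_{\real}^k)$ and partition each processor's on-slots of $S_{\real}$ into those where $S_{\opt}$'s corresponding processor is also on (charged to $\costs(S_{\opt})$, total $\leq \OPT$, using the stair-property of $S_{\opt}$ from Lemma~\ref{lemma:stair_property_opt} plus Lemmas~\ref{lemma:phi_prop},~\ref{lemma:constellation},~\ref{lemma:augmented} which guarantee $S_{\opt}$ needs $k$ on-processors wherever $S_{\real}^k$ is on within a $\crit \geq 2$ busy interval or an auxiliary slot), and the remaining on-slots, which lie in genuine $\crit \leq 1$ busy intervals carrying genuine jobs (total genuine volume $\leq P$) or in idle intervals contributing to $\idle$, bounded via Lemma~\ref{lemma:intersection} by $2$ per $S_{\opt}$-idle-interval plus at most $P$; summing gives $\costs(S_{\real}) \leq 2\,\OPT + P$. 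Then $\costs(S_{\pltr}) \leq \costs(S_{\real}) \leq 2\,\OPT + P$, completing the proof.
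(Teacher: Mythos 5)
Your overall frame matches the paper's (bound $\costs(S_{\real})$ against $\OPT$ via Lemma~\ref{lemma:intersection}, then argue $\costs(S_{\pltr}) \leq \costs(S_{\real})$ by undoing the modifications), but the central quantitative step is missing and the substitute you propose would fail. The paper disposes of the auxiliary busy slots by pure bookkeeping: the augmentation adds an auxiliary slot only at slots $t$ with $\crit(t) \geq 2$, hence $\vol(t) \geq 1$, and the supply $\res(V) = 2|J(V)|$ with its decrements ensures every $\close$ operation is paid for by jobs genuinely scheduled at the affected slots; this yields $\busy(S_{\real}) \leq 2P$, which combined with $\idle(S^k_{\real}) \leq 2\off(S^k_{\opt}) + \on(S^k_{\opt})$ and $P \leq \on(S_{\opt})$ gives $2\OPT + P$. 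You instead try to charge the auxiliary busy slots slot-for-slot to on-slots of $S_{\opt}$, claiming that Lemmas~\ref{lemma:augmented}, \ref{lemma:phi_prop} and~\ref{lemma:constellation} force $S_{\opt}$ to keep $k$ processors on wherever $S^k_{\real}$ is on inside a busy interval of criticality at least $2$. They do not: $\hat\phi(B) > k-1$ only forces $k$ busy processors at \emph{some} slot of $B$ (exactly what the intersection argument needs), not at every slot, so this charge is unsupported, and, as you notice yourself, it also collides with the $P$ budget already spent on genuine busy slots. Without a bound of the form $\busy(S_{\real}) \leq 2P$ (or an equivalent count of how many auxiliary slots augmentation and realignment can create) the final inequality does not follow.

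Two smaller problems point the same way. First, your justification of $\costs(S_{\pltr}) \leq \costs(S_{\real})$ is backwards: converting an idle slot to a busy slot can \emph{increase} cost (splitting a long idle interval can raise $\idle$ from $q$ to nearly $2q$), and the realignment moreover \emph{moves} busy slots between processors rather than only adding them; the correct argument, as in the paper, is that undoing the modifications (deleting auxiliary slots, which merges idle intervals, and pushing busy slots back down to lower processors as in Lemma~\ref{lemma:stair_property_opt}) cannot increase cost. Second, charging each idle interval of $S^k_{\real}$ (cost up to $q$) against the \emph{idle} intervals of $S^k_{\opt}$ it meets is lossy, since an idle interval of $S_{\opt}$ may cost far less than $q$; the paper avoids this by splitting into the idle intervals of $S^k_{\real}$ that meet an \emph{off} interval of $S^k_{\opt}$ (each off interval costs exactly $q$ and is hit at most twice by Lemma~\ref{lemma:intersection}) and those that meet none, whose total length is at most $\on(S^k_{\opt})$. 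Your sketch would need this sharper split, plus the $2P$ bound on $\busy(S_{\real})$, for the ledger to close.
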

\begin{proof}
  We begin by bounding $\costs(S_{\real})$ as in the proposition.
  First, we show that $\idle(S^k_{\real}) \leq 2 \off(S^k_{\opt}) + \on(S^k_{\opt})$ for every processor $k \in [m]$.
  Let $\mathcal{I}_1$ be the set of idle intervals on $S^k_{\real}$ which intersect some $\off$ interval of $S^k_{\opt}$.
  Lemma~\ref{lemma:intersection} implies that $\mathcal{I}_1$ contains as most twice as many intervals as there are $\off$ intervals in $S^k_{\opt}$.
  Since the costs of each idle interval are at most $q$, and the costs of each off interval are exactly $q$, the costs of all idle intervals in $\mathcal{I}_1$ is bounded by $2 \off(S^k_{\opt})$.
  Let $\mathcal{I}_2$ be the set of idle intervals on $S^k_{\real}$ which do not intersect any $\off$ interval in $S^k_{\opt}$.
  The total length of these intervals is naturally bounded by $\on(S^k_{\opt})$.

  We continue by showing that $\busy(S_{\real}) \leq 2 P$.
  By construction of $S_{\aug}$ and the definition of $\res$ and $\close$, we introduce at most as many auxiliary busy slots at every slot $t \in T$ as there are jobs scheduled at $t$ in $S_{\pltr}$.
  For $S_{\aug}$, an auxiliary busy slot is only added for $t$ with $\crit(t) \geq 2$ and hence $\vol(t) \geq 1$.
  Furthermore, initially $\res(V) = 2 |J(V)|$ for every valley $V$ and $\res(V)$ is decremented if some $V'$ intersecting $V$ is closed during $\fillop(k, T)$.
  During $\fillop(k, T)$ at most a single $V'$ containing $t$ is closed for every $t \in T$.
  Finally, auxiliary busy slots introduced by $S_{\aug}$ are used in the subroutine $\close$.
  This establishes the lower bound $\costs(S_{\real}) = \idle(S_{\real}) + \busy(S_{\real}) \leq 2 \off(S_{\opt}) + \on(S_{\opt}) + 2 P \leq 2 \OPT + P$ for our realigned schedule.

  We complete the proof by arguing that $\costs(S_{\pltr}) \leq \costs(S_{\real})$ since transforming $S_{\real}$ back into $S_{\pltr}$ does not increase the costs of the schedule.
  Removing the auxiliary busy slots clearly cannot increase the costs.
  Since the realignment of $S_{\aug}$ only moves busy slots between processors, but not between different time slots, we can easily restore $S_{\pltr}$ (up to permutations of the jobs scheduled on the busy processors at the same time slot) by moving all busy slots back down to the lower numbered processors.
  By the same argument as in Lemma~\ref{lemma:stair_property_opt}, this does not increase the total costs of the schedule.
\end{proof}

\section{Running Time}\label{section:running_time}
\begin{theorem}\label{theorem:running_time}
  Algorithm $\PLTR$ has a running time of $\mathcal{O}(n f \log d)$ where $f$ denotes the time needed for finding a maximum flow in a network with $\mathcal{O}(n)$ nodes.
\end{theorem}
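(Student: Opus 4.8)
The plan is to bound separately the number of maximum-flow computations $\PLTR$ performs and the cost of a single one; their product is $\mathcal{O}(nf\log d)$.

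\emph{Preprocessing.} Since $\fv(j,Q)\le\min\{p_j,|Q|\}\le|Q|$ for every job $j$ and every $Q\subseteq T$, we get $\phi(Q)=\fv(Q)/|Q|\le n$, hence $\hat\phi(T)\le n$, so an instance feasible with $m$ processors is also feasible with $\min\{m,n\}$ of them; moreover turning off a permanently idle processor never increases cost. We may therefore replace $m$ by $\min\{m,n\}$, changing neither $\OPT$ nor the part of $S_{\pltr}$ that $\PLTR$ actually produces, and assume $m\le n$.

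\emph{Number of calls.} Each pass of the \textbf{while}-loop for a fixed processor $k$ performs one $\keepidle$ and one $\keepbusy$ call and, because $\keepbusy$ extends the busy block maximally, finalizes exactly one full busy interval of $S_{\pltr}$ on processor $k$ (together with the preceding idle interval). So the number of calls is within a constant factor of the total number of busy intervals of $S_{\pltr}$, i.e.\ of the number of engagements. The plan is to prove this number is $\mathcal{O}(n)$ by a charging argument: by definition of $\keepidle$, an engagement $e$ of processor $k$ occurs only because keeping processor $k$ idle at $e$ is infeasible, so by Lemma~\ref{lemma:feasibility} together with the tight-set structure of Lemma~\ref{lemma:critical} there is a set $Q\ni e$ whose deficiency would become positive, and such a $Q$ is ``pinned'' at $e$ through the release time or deadline of a specific job $j$ whose forced volume it collects; one charges the busy interval to $j$ and argues that each job is charged only $\mathcal{O}(1)$ times (roughly once per processor level and side on which its constraint is the binding one). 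This is the main obstacle: in the single-processor case it is immediate --- the job run in the last slot of a busy interval completes there and is never touched again, giving an injective charge --- whereas here a job may migrate down to a lower-numbered processor and resurface in a later busy interval, so the charging has to go through the critical-set machinery of Section~\ref{section:structure} rather than through the greedy EDF order.

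\emph{Binary search inside a call and cost of a check.} In $\keepidle(k,t)$, enlarging the interval $[t,t')$ on which $m_{t''}$ is lowered to $k-1$ only tightens the instance, so by Lemma~\ref{lemma:feasibility} feasibility is monotone in $t'$; an ordinary binary search over $t'\in\{t+1,\dots,d+1\}$ locates the maximal feasible $t'$ with $\mathcal{O}(\log d)$ feasibility checks, and symmetrically for $\keepbusy$, giving $\mathcal{O}(n\log d)$ checks in total. For a single check, maintain the bounds $l_t,m_t$ as $\mathcal{O}(n)$ pieces and collapse every maximal run of slots carrying the same pair $(l_t,m_t)$ and crossed by no release time or deadline into one node of the network of Figure~\ref{fig:flow}, scaling the capacities of its edges to $\gamma$, to $\omega$, and from each $u_j$ (present iff the run lies inside $E_j$) by the run's length; since flow is interchangeable among the slots of a run, this preserves the maximum-flow value, hence by Lemma~\ref{lemma:flow_feasibility} preserves feasibility. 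There are $\mathcal{O}(n)$ release times and deadlines and the bounds are modified at only $\mathcal{O}(n)$ positions in total (one interval per call), so every network arising has $\mathcal{O}(n)$ nodes; building it and running a maximum flow costs $\mathcal{O}(f)$, and extracting $S_{\pltr}$ from the last flow as in Lemma~\ref{lemma:flow_feasibility} is absorbed. Multiplying, $\PLTR$ runs in time $\mathcal{O}(n\log d)\cdot\mathcal{O}(f)=\mathcal{O}(nf\log d)$.
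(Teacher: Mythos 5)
Your reductions $m\le n$, the monotonicity justifying binary search inside $\keepidle$/$\keepbusy$, and the collapsing of the time horizon into $\mathcal{O}(n)$ intervals to get a flow network with $\mathcal{O}(n)$ nodes all match the paper's proof and are fine. The problem is the step you yourself flag as ``the main obstacle'': you never prove that the total number of busy intervals of $S_{\pltr}$ \emph{across all processors} is $\mathcal{O}(n)$, you only announce a plan to charge each engagement to a job via tight sets and assert that each job is charged $\mathcal{O}(1)$ times. This is not a routine detail that can be waved at: it is the heart of the theorem. Without it the only easy bound on the number of $\keepidle$/$\keepbusy$ calls is roughly one per busy interval per processor, i.e.\ something like $\mathcal{O}(nm)=\mathcal{O}(n^2)$ after your preprocessing, which yields $\mathcal{O}(n^2 f\log d)$, not the claimed $\mathcal{O}(n f\log d)$. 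Moreover your sketch (``once per processor level and side'') would, even if made precise, give each job $\Theta(m)$ charges rather than $\mathcal{O}(1)$, so the plan as stated does not obviously close the gap. Note also that your count of bound-change positions for the $\mathcal{O}(n)$-node network quietly reuses this same unproven $\mathcal{O}(n)$ bound.

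For comparison, the paper carries this out by building an explicit injective map $g$ from busy intervals to jobs with $g(B)=j$ only if $d_j\in B$, processing the busy intervals in the algorithm's Top-Left-to-Bottom-Right order. For a busy interval $B$ on processor $k$ it takes the last \emph{plateau} $B'\subseteq B$ (a busy interval on some processor $l\ge k$ all of whose slots are idle on higher processors); since plateaus are disjoint, at most $l-k$ previously mapped jobs have deadlines in $[\min B',\max B]$. Then, using a tight set $Q_t$ for the engagement $t=\min B'$ of processor $l$ from Lemma~\ref{lemma:critical}, every job scheduled at $t$ satisfies $\vol(j,Q_t)=\fv(j,Q_t)>0$, and since $\vol(\max B+1)<k$ forces $\max B+1\notin Q_t$, every such job with $d_j>\max B$ must be scheduled at $\max B+1$; hence at least $l-(k-1)$ of the $l$ jobs scheduled at $t$ have deadlines inside $[\min B',\max B]$, leaving one unassigned job for $B$. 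This pigeonhole interplay between plateaus and tight sets is precisely the mechanism that defeats the migration problem you correctly identified, and it is absent from your proposal.
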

\begin{proof}
  First observe that every busy interval is created by a pair of calls to $\keepidle$ and $\keepbusy$, respectively.
  We begin by bounding the number of busy intervals across all processors in $S_{\pltr}$ by $n$.
  Note that if $\keepidle$ returns $d$, then we do not have to calculate $\keepbusy$ from $d$ on.
  Therefore, the total number of calls to $\keepidle$ and $\keepbusy$ is then bounded by $n + m$.
  If $m > n$ we can restrict our algorithm to use the first $n$ processors only, as there cannot be more than $n$ processors scheduling jobs at the same time.
  We derive the upper bound of $n$ for the number of busy intervals across all processors by constructing an injective mapping $g$ from the set of busy intervals to the jobs $J$.
  For this construction of $g$ we consider the busy intervals in the same order as the algorithm, i.e.\ from Top-Left to Bottom-Right.
  We construct $g$ such that $g(B) = j$ only if $d_j \in B$.

  Suppose we have constructed such a mapping for busy intervals on processors $m, \ldots, k$ up to some busy interval $B$ on $k$.
  We call a busy interval $B'$ in $S_{\pltr}$ on processor $l \in [m]$ a \emph{plateau on processor $l$}, if all slots of $B'$ are idle for all processors above $l$.
  Observe that plateaus (even across different processors) cannot intersect, which implies an ordering of the plateaus from left to right.
  Let $B'$ be the last plateau with $B' \subseteq B$ and let $l \geq k$ be the processor for which this busy interval $B'$ is a plateau.
  By construction of $g$ and the choice of $B'$, there are at most $l - k$ distinct jobs $j$ with $d_j \in [\min B', \max B]$ already mapped to by $g$.
  This is since at most $l - k$ busy intervals on processors $k+1, \ldots, m$ intersect the interval $[\min B', \max B]$.
  Let $Q_t \subseteq T$ be a tight set over engagement $t \coloneqq \min B$ of processor $l$.
  Let $J' \coloneqq \{ j_1, \ldots, j_{l} \}$ be the $l$ distinct jobs scheduled at $t$.
  We know that $\max B + 1 \notin C_t$ since $\vol(\max B + 1) < k \leq l$ and $\max B + 1 > t$.
  With $\vol(j, Q_t) = \fv(j, Q_t) > 0$ for every $j \in J'$, we know that every job $j \in J'$ with $d_j > \max B$ is scheduled at slot $\max B + 1$.
  Hence there are at least $l - (k-1)$ distinct jobs $j \in J'$ with $d_j \in [\min B', \max B]$ and there must be at least one such job $j^*$ which is not mapped to by $g$ so far and which we therefore can assign to $B$.

  Having bounded the number of calls to $\keepidle$ and $\keepbusy$ by $\mathcal{O}(n)$, the final step is to bound the running time of these two subroutines by $\mathcal{O}(f \log d)$.
  A slight modification to the flow-network of Figure~\ref{fig:flow} suffices to have only $\mathcal{O}(n)$ nodes.
  The idea here is to partition the time horizon $T$ into $\mathcal{O}(n)$ \textit{time intervals} instead of $\mathcal{O}(d)$ individual time slots.
  Since this is a standard problem as laid out in e.g.\ Chapter 5 of~\cite{brucker_scheduling_algorithms}, we only sketch the main points relevant to our setting in the following.

  The partition of the time horizon into time intervals is done by using the release times and deadlines as splitting points of the time horizon and scaling the capacities of the incoming and outgoing edges by the length of the time interval.
  For our generalization we additionally have to split whenever an upper or lower bound $l_t, m_t$ changes.
  Since we have already bounded the number of such times by $2n$ in the first part of this proof, there are only $\mathcal{O}(n)$ time intervals and hence also $\mathcal{O}(n)$ nodes in the flow network.
  Also note that constructing the sub-schedules within the time intervals is a much simpler scheduling problem, since by construction, for every time interval and every job $j$, the execution interval $E_j$ either completely contains the time interval or does not intersect it.
  Such a sub-schedule can be computed in $O(n^2)$, as laid out in Chapter 5 of~\cite{brucker_scheduling_algorithms}.
  With the feasibility check running in time $\mathcal{O}(f)$, each call to $\keepidle$ and $\keepbusy$ can be completed in $\mathcal{O}(f \log d)$ using binary search on the remaining time horizon.
\end{proof}

\section*{Acknowledgement}
Thanks to Prof.\ Dr.\ Susanne Albers for her supervision during my studies.
The idea of generalizing the Left-to-Right algorithm emerged in discussions during this supervision.
This work was supported by the Research Training Network of the Deutsche Forschungsgemeinschaft (DFG) (378803395: ConVeY).
\bibliographystyle{abbrvnat}
\bibliography{references}
\section*{Appendix}
\lemmastairproperty*
\begin{proof}
  Let $S$ be an optimal schedule.
  We transform $S$ such that it fulfills the stair-property without increasing its costs and while maintaining feasibility.
  Let $k, k' \in [m]$ be two processors with $k' < k$ and job $j \in J$ scheduled on processor $k$ in time slot $t \in T$ while $k'$ is idle in $t$.
  Let $I$ be the idle interval on processor $k'$ containing $t$.
  We now move all jobs scheduled on processor $k$ during $I$ to be scheduled on processor $k'$ instead.
  Since $I$ is a maximal interval for which processor $k'$ is idle, this modification does not increase the combined costs of processors $k'$ and $k$.
  The modification also moves at least job $j$ from processor $k$ down to $k'$ while not moving any job from processor $k'$ to $k$.
  Jobs are only moved between processors at the same time slot and only to slots of processor $k'$ which are idle, hence the resulting schedule is still feasible.
  This modification can be repeated until the schedule has the desired property.
\end{proof}

\lemmaflowfeasibility*
\begin{proof}
  Let $f$ be an $\alpha$-$\omega$ flow of value $|f| = P$.
  We construct a feasible schedule from $f$ respecting the lower and upper bounds given by $l_t$ and $m_t$.
  For every $j \in J$ and  $t \in T$, if $f(u_j, v_t) = 1$, then schedule $j$ at slot $t$ on the lowest-numbered processor not scheduling some other job.
  Since $|f| = P$ and the capacity of the cut $c(\{\alpha\}, V \setminus \{\alpha\}) = P$, we have $f_{in}(u_j) = p_j$ for every $j \in J$.
  Hence $f_{out}(u_j) = \sum_{t \in E_j} f_{in}(v_t) = p_j$.
  Hence every job $j$ is scheduled in $p_j$ distinct time slots within its execution interval.

  The schedule respects the upper bounds $m_t$, since $c(v_t, \gamma) + c(v_t, \omega) \leq m_t - l_t + l_t$ and hence for every $t$ at most $m_t$ jobs are scheduled at $t$.
  The schedule respects the lower bounds $l_t$, since
  $c(V \setminus \{\omega\}, \{\omega\}) = P$ and hence
  $f(v_t, \omega) = l_t$ for every slot $t \in T$.
  By flow conservation we then have $f_{in}(v_t) \geq l_t$, which implies that at least $l_t$ jobs are scheduled at every slot $t$.

  For the other direction consider a feasible schedule respecting the lower and upper bounds $l_t, m_t$.
  We construct a flow $f$ of value $P$ and show that it is maximal.
  If $j$ is scheduled at slot $t$ and hence $t \in E_j$,
  define $f(u_j, v_t) = 1$, otherwise $f(u_j, v_t) = 0$.
  Define $f(\alpha, u_j) = p_j$ for every $j \in J$.
  Hence we have $f_{in}(u_j) = p_j$
  and $f_{out}(u_j)$ must be  $p_j$ since this corresponds to the number of distinct time slots in which $j$ is scheduled.
  Define $f(v_t, \omega) = l_t$ for every slot $t \in T$.
  Define $f(v_t, \gamma) = f_{in}(v_t) - l_t$.
  We have $f(v_t, \gamma) \leq m_t - l_t$ since $f_{in}(v_t)$ corresponds to the number $\vol(t)$ of jobs scheduled at $t$, which is at most $m_t$.
  We also have $f_{out}(v_t) = f_{in}(v_t) - l_t + l_t = f_{in}(v_t)$.
  Define $f(\gamma, \omega) = P - \sum_{t \in T} l_t$.
  Then $f_{in}(\gamma) = \sum_{t \in T} f_{in}(v_t) - l_t
  = \sum_{t \in T} \vol(t) - \sum_{t \in T} l_t$.
  Since the schedule is feasible, we have $\sum_{t \in T} \vol(t) = P$ and finally the flow conservation $f_{in}(\gamma) = P - \sum_{t \in T} l_t = f_{out}(\gamma)$.
\end{proof}

\lemmacut*
\begin{proof}
  Let $(S, \bar S)$ be an $\alpha$-$\omega$ cut and let $J(S) \coloneqq \{j \mid u_j \in S\}$.
  We consider the contribution of every node of $S$ to the capacity $c(S)$ of the cut.
  First consider the case that $\gamma \notin S$.
  \begin{itemize}
    \item Node $\alpha$: $\sum_{j \in J(\bar S)} p_j$
    \item Node $u_j$: $|\{v_t \in \bar S \mid t \in E_j\}| = | E_j \setminus Q(S) | \geq p_j - \fv(j, Q(S))$
    \item Node $v_t$: $l_t + m_t - l_t = m_t$
  \end{itemize}
  The inequality for node $u_j$ follows since $\fv(j, Q(S)) = \max \{0, p_j - |E_j \setminus Q(S)| \}$.
  In total, we can bound the capacity from below with
  \begin{align}
    c(S) &\geq \sum_{j \in J(\bar S)} p_j + \sum_{j \in J(S)} p_j - \fv(j, Q(S)) + \sum_{t \in Q(S)} m_t
    \\ &= P - \fv(J(S), Q(S)) + \sum_{t \in Q(S)} m_t
    \\ &\geq P - \opdef(Q(S))\text{.}
  \end{align}
  If $\gamma \in S$, we have the following contributions of nodes in $S$ to the capacity of the cut:
  \begin{itemize}
    \item Node $\alpha$: $\sum_{j \in J(\bar S)} p_j \geq \pv(J(\bar S), Q(\bar S))$
    \item Node $u_j$: $| E_j \setminus Q(S) |
      = | E_j \cap Q(\bar S)| \geq \pv(j, Q(\bar S))$
    \item Node $v_t$: $l_t$
    \item Node $\gamma$: $P - \sum_{t \in T} l_t$
  \end{itemize}
  In total, we obtain the alternative lower bound
  $c(S) \geq P + \pv(Q(\bar S))
  - \sum_{t \in Q(\bar S)} l_t
  = P - \exc(Q(\bar S)) \text{.}$
\end{proof}

\lemmafeasibility*
\begin{proof}
  If $\opdef(Q) > 0$ for some $Q$, then some upper bound $m_t$ cannot be met.
  If $\exc(Q) > 0$ for some $Q$, then some lower bound $l_t$ cannot be met.
  For the direction from right to left, consider an infeasible scheduling instance with lower and upper bounds.
  By Lemma~\ref{lemma:flow_feasibility} we have that the maximum flow $f$ for this instance has value $|f| < P$.
  Hence, there must be an $\alpha$-$\omega$ cut $(S, \bar S)$ of capacity $c(S) < P$.
  Lemma~\ref{lemma:cut} now implies that $\opdef(Q(S)) > 0$ or $\exc(Q(\bar S)) > 0$.
\end{proof}

\lemmainvariant*
\begin{proof}
  We show Invariants 1 and 2 via structural induction on the realigned schedule $S_{\real}$.
  Then we show that Invariant 2 implies Invariant 3.
  For the induction base, consider $S_{\aug}$, let $V$ be an arbitrary valley in $S_{\aug}$ with $c \coloneqq \crit(V) \geq 2$, and let $C$ be the critical set with $C \sim V$.
  We must have $k_V \leq c$, otherwise $V$ would contain a full busy interval on processor $k_V > c$ and hence also an engagement $t \in V$ of processor $k_V$, which by construction of $S_{\aug}$ would have $\crit(t) = k_V > c$.
  This is a direct contradiction to $\crit(V) = \max_{t \in V} \crit(t) = c$.
  Invariant 2 now follows since by construction of $S_{\aug}$ and our choice of $C$ we have for every $t \in C$ that processors $1, \ldots, k_V, \ldots, c$ are busy at $t$.
  For Invariant 1, let $D$ be a section of $C$ with $\phi(C \cap D) \leq k_V - \delta$ for some $\delta \in \mathbb{N}$.
  With $k_V \leq c$ we get $\phi(C \cap D) \leq c - \delta$ and hence by Lemma~\ref{lemma:valley}, we have that the left valley $V_l$ or the right valley $V_r$ of $C$ and $D$ exists and $|J(V_l)| + |J(V_r)| \geq \delta$.
  With the initial definition of the supply $\res(V)$ of a valley, we get the desired lower bound of $\res(V_l) + \res(V_r) \geq 2 \delta$.

  Now suppose that Invariants 1 and 2 hold at all steps of the realignment up to a specific next step.
  Let $V$ again be an arbitrary valley of $\crit(V) \geq 2$ and let $k$ be the processor currently being filled.
  Let furthermore $k_V, k'_V$ be the critical processor for $V$ before and after, respectively, the next step in the realignment.
  There are four cases to consider for this next step.

  \paragraph{Case 1:}
      Some $V' \supseteq V$ is closed on processor $k$.
      Then no valley $W$ intersecting $V$ has been closed so far on $k$.
      Also, since $\close(k, \cdot)$ only moves the busy slot of the highest busy processor below $k$, we know that the stair property holds within $V$ when only considering processors $1, \ldots, k$.
      We show that the closing of $V'$ on $k$ reduces the critical processor of $V$ by at least $1$, i.e. $k'_V \leq k_V - 1$.
      If $k_V = k$, then $V' \supseteq V$ is closed on processor $k_V$ and hence by definition we have $k'_V \leq k_V - 1$.
      If $k_V < k$, suppose for contradiction that $k_V \leq k'_V \leq k$, where $k'_V \leq k$ again holds by definition of $k'$ since $V' \subseteq V$ is closed on processor $k$.

      Let $B \subseteq V$ be a full busy interval on $k_V$ before the close of $V'$.
      We show that $B \subset V$, i.e.\ that there must be some $t \in V$ idle on $k_V$ before the close.
      The stair-property then implies that processors $k_V, \ldots, k$ are idle at $t$ before the close.
      Since some $V' \supseteq V$ is closed, clearly $V \subset T$ by the choice of $V'$ as valley of some critical set in the realignment definition.
      Therefore we have $\min V - 1 \in T$ or $\max V + 1 \in T$, without loss of generality we assume the former.
      We show that $t \coloneqq \min V - 1$ must be busy on processor $k_V$ before the close.
      Let $W$ be the valley with $W \sim t$ and $t \in W$.
      We know that $W \supseteq V$ since $V$ is a valley and hence $V \prec t \sim W$.
      By our case assumption and the definition of the realignment, no $W' \supseteq W$ can have been closed on processor $k$ so far.
      With $W \supseteq V$ and the definition of $k_W$ we get $k_W \geq k_V$, where $k_W$ is the critical processor of $W$ before the close.
      Our induction hypothesis now implies that processors $1, \ldots, k_V, \ldots, k_W$ are busy at $t$ before the close.
      For $B \subseteq V$ to be a (full) busy interval on $k_V$ before the close, we hence must have $\min V \notin B$.
      We know by definition of the realignment and the subroutine $\close$ that for every $k'$ with $k_V \leq k' < k$ and every $t \in V$:
      \begin{itemize}
        \item
          If $t$ was idle on $k'$ before the close, then $t$ is still idle on $k'$ after the close (definition of $\close$, $k' < k$).
        \item
          If $t$ was idle on $k_V$ before the close, then $t$ was idle on $k'$ before (stair-property with $k_V \leq k'$) and hence $t$ is still idle on $k'$ after the close.
        \item
          If $t$ was part of a full busy interval $B \subseteq V$ on $k_V$ before the close, then $t$ was idle on $k_V + 1$ before the close.
          Otherwise, by the stair property there would have been a full busy interval $B' \subseteq B \subseteq V$ on processor $k_V + 1 \leq k$ before the close, contradicting the definition of $k_V$.
          Hence $t$ was idle on $k$ before by the stair-property and therefore $t$ is idle on $k_V$ after the close (by the definition of close).
      \end{itemize}
      Taken together, for $t \in V$ to be busy on $k'$ after the close, $t$ must have been busy on $k'$ before the close (definition $\close, k' < k$) and $t$ cannot have been part of a full busy interval $B \subseteq V$.
      Hence $t \in B$ for some \emph{partial} busy interval $B \subseteq V$ on $k'$ before the close.
      For $B' \subseteq V$ to be a full busy interval on $k'_V$ after the close (with $k_V \leq k'_V < k$), we must have $B' \subseteq B$, as shown in the following sketch.
      \begin{figure}[H]
        \centering
        \includegraphics[scale=0.4]{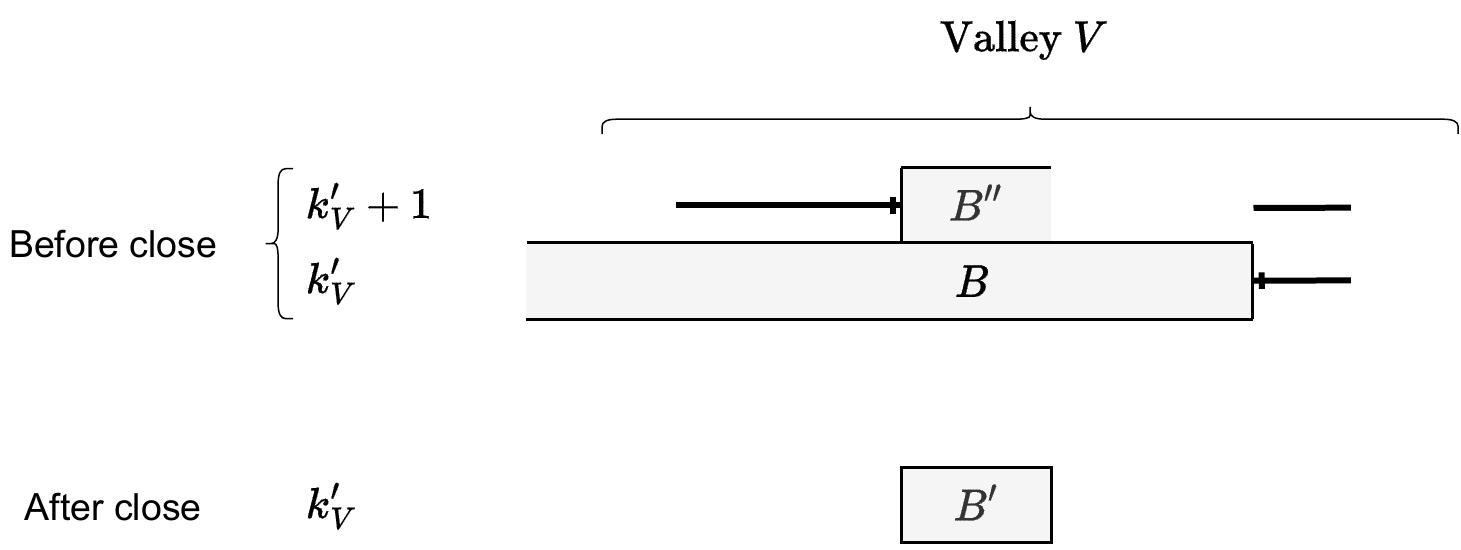}
      \end{figure}
      Hence there must have been a busy interval $B'' \subseteq [\min B', \max B]$ on processor $k'_V + 1 > k_V$ before the close, which contradicts the choice of $k_V < k$.
      In conclusion, we have $k'_V \leq k_V - 1$, which allows us to prove Invariants 1 and 2.
      If $\phi(C \cap D) \leq k'_{V} - \delta$ for some $\delta \in \mathbb{N}$ and some section $D$ of $C$, then $\phi(C \cap D) \leq k_V - (\delta + 1)$ and hence by induction hypothesis the left valley $V_l$ or the right valley $V_r$ for $C, D$ exists and $\res(V_l) + \res(V_r) \leq 2 (\delta + 1)$ both before and after the close.
      Our induction hypothesis also implies that for every $t \in C \cap V$, processors $1, \ldots, k_V$ are busy before the close.
      Since at most the uppermost busy slot is moved by $\close$, after the close of $V'$ we still have that processors $1, \ldots, k_V - 1 \geq k'_V$ are busy.

  \paragraph{Case 2:}\label{case:2}
      Some $V' \subset V$ is closed on processor $k$.
      Again, no $V'' \supseteq V$ can have been closed on processor $k$ so far.
      We show that $k_V = k \geq k'_V$, i.e.\ that the critical processor of $V$ before the close of $V'$ is the processor currently being filled.
      Let $W$ be the valley for which $V'$ is closed, i.e.\ $V'$ is closed during $\fillop(k, W)$.
      We must have $W \supset V'$ and therefore no $W' \supseteq W$ has been closed on $k$ so far.
      Also, for $V'$ to be closed in $\fillop(k, W)$, there must be some busy interval $B \subseteq W$ on $k$ before the close, hence $k_W = k$.
      Since $V' \subset V$ and $V' \subset W$, $V$ and $W$ intersect ($V' \neq \emptyset$ by definition of $V'$ as valley).
      Let $C_W$ be the critical set with $C_W \sim W$.
      If $V \prec W$, then by choice of $V'$ as valley of $C_W$ we must have $V \subseteq V'$, which contradicts our case assumption.
      Therefore $V \succeq W$ and $V \supseteq W$, which in turn implies $k_V \leq k_W = k$.
      Since processor $k+1$ is already completely filled before the close, we have $k_V = k \geq k'_V$.

      For Invariant 1, again let $\phi(C \cap D) \leq k'_V - \delta$ and hence $\phi(C \cap D) \leq k_V - \delta$ for some $\delta \in \mathbb{N}$ and some section $D$ of $C$.
      Our induction hypothesis implies that the left valley $V_l$ or the right valley $V_r$ of $C, D$ exists and that both before and after the close we have $\res(V_l) + \res(V_r) \geq 2 \delta$.
      For Invariant 2, observe that $V' \cap C = \emptyset$ since our case assumption $V' \subset V$ implies $V' \prec C$.
      Therefore, no slots of $C$ are modified when $V'$ is closed.
      Invariant 2 now directly follows from the induction hypothesis and $k'_V \leq k_V$.

  \paragraph{Case 3:}
      Some $V'$ with $V' \cap V = \emptyset$ is closed on processor $k$.
      We first show that $\min V - 1 \notin V'$ and symmetrically $\max V + 1 \notin V'$.
      Consider $t \coloneqq \min V - 1$ and assume $t \in T$.
      By choice of $V$ and $t$ we must have $t \succ V$.
      If $t \in V'$, we would have $V' \succ V$ and hence $V' \supseteq V$, which contradicts our case assumption.
      Symmetrically, we know that $\max V + 1 \notin V'$.
      Therefore the close of $V'$ does not modify the schedule within $[\min V - 1, \max V + 1]$, implying that no partial busy interval in $V$ before the close can become a full busy interval.
      Hence we have $k_V = k'_V$ and Invariants 1 and 2 follow as in Case 2.

  \paragraph{Case 4:}
      The call to $\fillop(k, T)$ returns and $\res(V')$ is decreased by 1 for every valley $V'$ such that some valley intersecting $V'$ has been closed during $\fillop(k, T)$.
      First observe that the schedule itself does not change by this step but processor $k$ is now fully filled, which implies $k'_V \leq k_V$.
      Invariant 2 then follows directly from the induction hypothesis.
      We consider two subcases.
      If during $\fillop(k, T)$, no valley $V'$ intersecting $V$ was closed on $k$, then $\res(V)$ does not change and Invariant 1 follows from the induction hypothesis and $k'_V \leq k_V$.

      If on the other hand some valley $V'$ intersecting $V$ was closed on $k$ during $\fillop(k, T)$, then $\res(V)$ is decreased by $1$ to $\res'(V) \coloneqq \res(V) - 1$.
      As argued in Cases 1 to 3, the critical processor of $V$ decreases monotonically during $\fillop(k, T)$.
      Consider the schedule right before the first valley $V'$ intersecting $V$ is closed on $k$.
      Let $k^0_V$ be the critical processor for $V$ at this point of the realignment and $k^1_V$ the critical processor right after $V'$ is closed.
      We have $k'_V \leq k^0_V - 1$:
      If $V' \supseteq V$, then as argued in Case 1, we have $k^1_V \leq k^0_V - 1$ and hence $k'_V \leq k_V \leq k^1_V \leq k^0_V -1$.
      If $V' \subset V$, then as argued in Case 2 we have $k^0_V = k$.
      Since by our case assumption $\fillop(k, T)$ returns in the next step, we have $k'_V \leq k - 1$ and hence $k'_V \leq k^0_V -1$.
      Invariant 2 now follows by our induction hypothesis.
      If $\phi(C \cap D) \leq k'_V - \delta$ then $\phi(C \cap D) \leq k^0_V - (\delta + 1)$ and hence by our induction hypothesis the left valley $V_l$ or the right valley $V_r$ of $C, D$ exists and before the close we have $\res(V_l) + \res(V_r) \geq 2 (\delta + 1)$.
      Since $\res$ is decreased for every valley by at most $1$, we have after the close that $\res'(V_l) + \res'(V_r) \geq 2 \delta$.

  We conclude by showing that Invariant 2 implies Invariant 3.
  Let $V$ be an arbitrary valley during the realignment of $S_{\aug}$ and $B \subseteq V$ a busy interval on processor $k_V$ with $B \sim V$.
  Let $C$ be the critical set with $C \sim V$.
  Note that $B \sim V$ implies that $B$ intersects $C$.
  Assume for contradiction that $B$ is not a section of $C$.
  Then $\min B$ lies strictly within a subinterval of $C$ or symmetrically $\max B$ lies strictly within a subinterval of $C$.
  We assume the first case, i.e.\ $t\coloneqq \min B - 1 \in C$ and $\min B \in C$.
  The second case follows by symmetry.
  If $t \in V$, then time slot $t$ is busy on processor $k_V$ by Invariant 2.
  Therefore, $B$ cannot be a (full) busy interval on processor $k_V$, contradicting the choice of $B$.
  If $t = \min V-1$, then consider the valley $W$ with $t \in W$ and $t \sim W$ and let $C_W$ be the critical set with $C_W \sim W$.
  We must have $W \supset V$, $W \succ V$ and $t \in C_W$.
  Therefore $k_W \geq k_V$ and Invariant 2 implies that $t = \min B - 1$ is busy on processor $k_V$, again contradicting the choice of $B$ as full busy interval on processor $k_V$.
\end{proof}

\lemmasreal*
\begin{proof}
  Since in the while-loop of $\fillop(k, V)$, the busy interval $B \subseteq V$ on $k_V$ always is a section of $C$ if $V \sim C$ (Invariant 3), the left valley $V_l$ and the right valley $V_r$ of the critical set $C$ and interval $B$ are properly defined.
  Also since $\hat \phi(B) \leq k - 1$, Invariant 1 implies that $V_l$ or $V_r$ exists and that there is sufficient $\res$ such that one of the two valleys of $C$ is closed in this iteration.
  This reduces the number of idle intervals on processor $k$ by at least $1$, since Invariant 2 implies that $V_l$ or $V_r$ cannot end strictly within an idle interval on $k$.
  Hence all terms in the realignment are well defined and the realignment terminates.
\end{proof}

\end{document}